\newtheorem{theorem}{\bf{Theorem}}
\newtheorem{condition}{\bf{Assumption}}
\newtheorem{definition}{\bf{Definition}}
\newtheorem{example}{\bf{Example}}
\newtheorem{lemma}{\bf{Lemma}}
\newtheorem{problem}{\bf{Problem}}
\newtheorem{proposition}{\bf{Proposition}}
\newtheorem{remark}{\bf{Remark}}
\begin{document} 
	 
\title{\Large Prescribed-Time Fully Distributed Nash Equilibrium Seeking in Noncooperative Games}
 
\author
{Zhi Feng and Guoqiang Hu 


\thanks{
This work was supported in part by Singapore Ministry of Education Academic Research Fund Tier 1 RG180/17 (2017-T1-002-158) and in part by Singapore Economic Development Board under EIRP grant S14-1172-NRF EIRP-IHL. 

Z. Feng and G. Hu are with the School of Electrical and Electronic Engineering, Nanyang Technological University, Singapore 639798 (E-mail: zhifeng@ntu.edu.sg; gqhu@ntu.edu.sg). 
}  
} 
 
 
\maketitle 
 
\begin{abstract}
In this paper, we investigate a prescribed-time and fully distributed Nash Equilibrium (NE) seeking problem for continuous-time noncooperative games. By exploiting  pseudo-gradient play and consensus-based schemes, various distributed NE seeking algorithms are presented over either fixed or switching communication topologies so that the convergence to the NE is reached in a prescribed time. In particular, a prescribed-time distributed NE seeking algorithm is firstly developed under a fixed graph to find the NE in a prior-given and user-defined time, provided that a static controller gain can be selected based on certain global information such as the algebraic connectivity of the communication graph and both the Lipschitz and monotone constants of the pseudo-gradient associated with players' objective functions. Secondly, a prescribed-time and fully distributed NE seeking algorithm is proposed to remove global information by designing heterogeneous dynamic gains that turn on-line the weights of the communication topology. Further, we extend this algorithm to accommodate jointly switching topologies. It is theoretically proved that the global convergence of those proposed algorithms to the NE is rigorously guaranteed in a prescribed time based on a time function transformation approach. In the last, numerical simulation results are presented to verify the effectiveness of the designs. 

\vspace*{2pt}       
The advantages of the proposed NE seeking algorithm include: 1) the convergence time is user-defined according to task requirements, which is neither dependent on any initial states or the parameters of the algorithm; 2) the proposed algorithm is fully distributed without requiring any global information on the graph's algebraic connectivity, the pseudo-gradient's Lipschitz and monotone constants, and the number of players; and 3) the communication graph is allowed to be jointly switching. The aforementioned requirements can improve the practical relevance of the problem to be addressed and meanwhile, it poses some technical challenges to the algorithm design and stability analysis, which makes that the NE seeking algorithms in the existing literature cannot be directly applied. 
\end{abstract}

\begin{IEEEkeywords}
Noncooperative game, NE seeking algorithm, Fully distributed, Prescribed-time convergence, Switching topologies. 
\end{IEEEkeywords}

\IEEEpeerreviewmaketitle

\vspace*{-6pt}
\section{Introduction}
Distributed Nash equilibrium (NE) seeking of non-cooperative games has become a hot research topic during the past decade  due to its broad applications in multi-robot systems \cite{Basar87AT}, mobile sensor networks \cite{Johansson12TAC}, smart grids \cite{Ye17Tcyber}, and so on. In contrast to early works (e.g., \cite{Shamma05TAC,Basar12TAC,Scutari04TIT}) with a complete information setting, players in distributed NE seeking have limited local information, i.e., each player needs to make the decision based on the local or relative information, e.g., information from its neighbors, to optimize its own cost function. The main challenges of distributed NE seeking exist in twofold: 1) each player's objective function is dependent on the other players' actions and therefore, its strategy is directly influenced by other players; and 2) each player is required to not only update its own strategy, but also to communicate on networks to estimate other players' actions.   
 
\textit{Related literature:} gradient-based NE seeking algorithms with average consensus or leader-follower  consensus designs are popular techniques to find the NE of networked games. 
Distributed NE seeking issues of continuous-time games have been addressed in  \cite{Ye17TAC,Ye19TAC,Pavel19TAC,Liang17AT,Deng19TNNLS,Pavel19AT,Pang20TAC,Wang19ICCA,Persis19AT}. In particular, distributed algorithms are proposed in \cite{Ye17TAC} and \cite{Ye19TAC} by combining the leader-follower consensus designs and gradient-play strategies over an undirected and connected graph. The singular perturbation method is employed to ensure that the consensus design can be faster than the gradient updating part, and the semi-global asymptotic convergence is thus achieved. 
The authors in \cite{Pavel19TAC} exploit some incremental passivity properties of pseudo-gradients to illustrate that the estimates of the proposed augmented gradient dynamics converge to the NE exponentially under graph coupling conditions. The singular perturbation design is further developed to provide an adjustable singular perturbation parameter so as to relax graph conditions. Distributed NE seeking of aggregative games is investigated in \cite{Liang17AT} and \cite{Deng19TNNLS}, where 
the former presents a nonsmooth NE seeking algorithm with identical constant gains, 
while the latter designs the controller based on a singular perturbation parameter.  
An alternating direction method of multipliers' design with the constant step-size is given in \cite{Pavel19AT}. 
Recently, \cite{Pang20TAC} uses a gradient-free NE seeking scheme for limited cost knowledge, 
where an almost sure NE convergence is obtained by a diminishing step-size, while a uniformly ultimately bounded (UUB) convergence is achieved by a fixed step-size.     

\textit{Research gap:} to the best of knowledge, there are some design limitations that have not been dealt with in the aforementioned works in \cite{Ye17TAC,Ye19TAC,Pavel19TAC,Liang17AT,Deng19TNNLS,Pavel19AT,Pang20TAC,Wang19ICCA,Persis19AT}. Firstly, one observation is that those aforementioned NE seeking results can guarantee a semi-global asymptotic convergence in \cite{Ye17TAC,Ye19TAC}, an asymptotic or exponential convergence in \cite{Pavel19TAC,Liang17AT,Deng19TNNLS}, a linear convergence in \cite{Pavel19AT}, or a UUB convergence in \cite{Pang20TAC}. That is, the NE is only seek over an infinite-time horizon. Besides, the convergence rate heavily relies on the players' initial conditions, the communication topology structure, and the control parameter of algorithms, which makes it hard to off-line preassign convergence time. \textbf{\textit{Hence, it is desirable to propose a prescribed-time distributed NE seeking algorithm with the convergence time prior-given and user-defined according to game requirements.}} Secondly, another observation is that all works in \cite{Ye17TAC,Ye19TAC,Pavel19TAC,Liang17AT,Deng19TNNLS,Pavel19AT,Pang20TAC,Wang19ICCA,Persis19AT} require the static control gains depending on global information such as the algebraic connectivity of graphs, the Lipschitz and monotone constants of pseudo-gradients, and the number of players. Notice that in practice, it is often hard to verify those global information in a larger-scale multi-agent system. 
In addition, the used singular 
perturbation control gain
in \cite{Ye17TAC,Ye19TAC,Pavel19TAC} has to be 
high-enough, which might be difficult to estimate and  implement. Moreover, the non-smooth algorithm 
in \cite{Liang17AT} based on a signum function often brings undesirable chattering behaviors. \textbf{\textit{Hence, it is desirable to develop a fully distributed NE seeking algorithm without requiring any global information, which is smooth with heterogeneous control gains that can turn on-line the control effort.}}    

This paper focuses on the prescribed-time and fully distributed research of NE seeking in noncooperative games, considering that very few of the existing literature has investigated these properties in distributed Nash games. The study of the NE convergence rate is partially inspired by the recent finite-/fixed-time research in the distributed consensus and optimization (e.g., \cite{Zhao16AT,Guan12TCS,SongTcyber18,ZhaoAT18, Zan19TAC,Pilloni16VSC,Ren14ACC,Chen18AT,Hu19TCNS,Ding19ACC}), which suffers from certain design limitations that make them not suitable for prescribed-time distributed NE games  (comparison details are summarized in Remark 6, and  omitted here). 

In this paper, we provide some feasible and easy-implemented algorithms to accomplish the task of distributed NE seeking with the arbitrary convergence time guarantee. The main technique is to exploit a time transformation function method, under which we  transform the proposed smooth prescribed-time algorithm into an infinite-time interval. Then, the Laypunov stability theory is still allowed to analyze the convergence of the NE in this infinite-time interval. As compared to existing distributed NE seeking works, the main contributions of this work can be summarized: 

\begin{itemize}
\item To the best of our knowledge, this paper is the first work to present a smooth and prescribed-time distributed NE seeking architecture to solve this issue. Different from the existing distributed NE results in \cite{Ye17TAC,Ye19TAC,Pavel19TAC,Liang17AT,Deng19TNNLS,Pavel19AT,Pang20TAC,Wang19ICCA,Persis19AT}, the salient feature of the proposed algorithm is that the arbitrary convergence time for reaching the consensus of all players' estimates and seeking the NE globally, is independent of any initial conditions and design parameters, thus can be explicitly pre-specified. This fast convergence is of great significance for a wide range of NE game applications in large-scale systems.   

\item Moreover, a prescribed-time and fully distributed NE seeking algorithm is developed, where the design not only provides a fast convergence but also not rely on any global information like the graph's algebraic connectivity, the pseudo-gradient's Lipschitz and monotone constants, and the number of players as required in \cite{Ye17TAC,Ye19TAC,Pavel19TAC,Liang17AT,Deng19TNNLS,Pavel19AT,Pang20TAC,Wang19ICCA,Persis19AT}. The global fast  convergence to the NE is achieved through adaptively adjusting a dynamic gain on the edges of the communication graph. In the absence of prescribed-time requirements, the  works in \cite{Wang19ICCA} and \cite{Persis19AT} also aim to solve distributed NE seeking issues via the dynamic gain. Unfortunately, the former adopting a decaying control gain requires global information, and the latter comes at the cost of two-hop communication among players.     

\item  Based on the time transformation method and the LaSalle's invariance principle, it is shown that the proposed prescribed-time and fully distributed architecture guarantees that the NE is globally stable by mild and standard assumptions on the players' pseudo-gradients and communication graphs. Lastly, the extension is presented to accommodate jointly switching topologies. In addition, the proposed algorithms are suitable for a general Nash game where the cost functions are coupled with other agents, which can cover the potential game in \cite{Ye17Tcyber} and aggregative games in 
\cite{Liang17AT,Deng19TNNLS,Pavel19AT}.
\end{itemize}

\vspace{1pt}
The paper is organized as follows. Section II gave  mathematical preliminaries. The non-cooperative game and the main objective are presented in Section III. Some prescribed-time distributed algorithms are proposed in Section IV with convergence analysis. Examples and numerical simulation results are given in Section V, followed by the conclusion in Section VI.

\section{Preliminaries} 
\vspace*{-8pt} 
\subsection{Notation \label{Notation}} 
Denote $\mathbb{R}$, $\mathbb{R}^{n}$, and $\mathbb{R}^{n\times m}$ as the sets of the real numbers, real $n$-dimensional vectors and real $n\times m$ matrices, respectively. 
Let $0_{n}$ ($1_{n}$) be the $n\times 1$ vector with all zeros (ones) and $I_{n}$ be the identity matrix. Let col$(x_{1},...,x_{n})$ and diag$\{a_{1},...,a_{n}\}$ be a column vector with entries $x_{i}$ and a diagonal matrix with entries $a_{i}$, $i=1,\cdots ,n$, respectively. The symbols $\otimes $ and $\left\Vert \cdot \right\Vert $ represent the Kronecker product and the Euclidean norm, respectively. Given a real symmetric matrix $M$, let $ M > 0 $ ($ M \geq 0 $) denote that $ M $ is positive (or positive semi-definite), and $\lambda _{\min }(M)$, $\lambda _{\max }(M)$ are its minimum and maximum eigenvalues, respectively. 

\vspace*{-8pt}
\subsection{Convex analysis \label{ConvexAnalysis}}
A function $ f: \mathbb{R}^{n} \rightarrow  \mathbb{R} $ is convex if $ f(ax+(1-a)y) \leq af(x)+(1-a)f(y)$ for any scalar $a \in [0,1] $ and vectors $ x,y \in \mathbb{R}^{n}$. $ f $ is locally Lipschitz on $ \mathbb{R}^{n}$ if it is locally Lipschitz at $ x $ for $\forall x \in \mathbb{R}^{n} $. If $ f $ is a differentiable function, $ \triangledown f $ denotes the gradient of $ f $. 
A vector-valued function (or mapping) $ F: \mathbb{R}^{n} \rightarrow \mathbb{R}^{n}$ is said to be $ \iota_{F} $-Lipschitz continuous if, for any $ x,y \in \mathbb{R}^{n}$, $ \| F(x)-F(y)\| \leq \iota_{F} \| x-y \| $. Function $ F: \mathbb{R}^{n} \rightarrow \mathbb{R}^{n}$ is (strictly) monotone if, for any $ x,y \in \mathbb{R}^{n}$, $ (x-y)^{T} (F(x)-F(y)) (>) \geq 0 $. Further, $ F $ is a $ \varepsilon $-strongly monotone, if for any scalar $ \varepsilon>0 $, and $ x,y \in \mathbb{R}^{n}$, $ (x-y)^{T} (F(x)-F(y)) > \varepsilon \|x-y\|^{2} $.   For a function $f$, it is said to be $\mathcal{C}^{m}$ if it is $ m $th continuously differentiable.

\vspace*{-8pt}
\subsection{Graph Theory\label{Graph theory}} 
\textit{Fixed graph}: let $\mathcal{G}$ $=$ $\left\{ \mathcal{V},\mathcal{E}  \right\} $ represent a communication graph, where $\mathcal{V} \in \left\{ 1,2,...,N\right\} $ is a set of nodes and  $\mathcal{E} \subseteq  \mathcal{V\times V}$ is a set of edges. 
In this paper, we assume that there is no self loops in the graph, that is, $(i,i)\notin \mathcal{E}$. A path from node $ i_{1} $ to node $ i_{q} $ is a sequence of ordered edges in the form of $ (i_{1}, i_{2}), \cdots, (i_{q-1},i_{q}) $. A graph is said to be connected if there exists a path connecting each pair of distinct nodes. An edge $(i,j)\in \mathcal{E}$ denotes that $ i $th agent receives the information from $ j $th agent, but not vice versa. Graph $\mathcal{G}$\ is said to be undirected if for any 
$(i,j)\in \mathcal{E}$, 
$%
(j,i)\in \mathcal{E}$. Let $\mathcal{N}_{i}(\mathcal{G})$ $=$ $\left\{ j\in \mathcal{V\mid }(j,i)\in \mathcal{E}%
\right\} $ denote the neighborhood set of node $i$. The adjacency matrix of $\mathcal{G}$ is denoted as $\mathcal{A}$ $=$ $\left[ a_{ij}\right] $ $\in $ $ \mathbb{R}^{N\times N}$, where $a_{ij}>0$ if and only if $(j,i)\in \mathcal{E}$, else $a_{ij}=0$. The Laplacian matrix of $\mathcal{G}$ is denoted by $\mathcal{L}$ $\mathcal{=}$ $\mathcal{[}l_{ij} \mathcal{]} \in  \mathbb{R}^{N\times N}$, where $l_{ii}$ $=$ $\sum\nolimits_{j=1}^{N}a_{ij}$ and $%
l_{ij} $ $=$ $-a_{ij}$ if $i\neq j.$ Also,  $\mathcal{L}=D-\mathcal{A}$ with a matrix 
$ D= \text{diag} \{\sum_{j=1}^{N} a_{ij} \} \in  \mathbb{R}^{N\times N}$.

\textit{Switching graph}: let $\mathcal{G}^{\sigma(t)}= \left\{ \mathcal{V},\mathcal{E}^{\sigma(t)} \right\} $ be a switching graph, where $\mathcal{E}^{\sigma(t)}  \subseteq  \mathcal{V\times V} $ for all $ t \geq 0 $. Here, we call a time function $ \sigma (t): [0, \infty) \rightarrow \mathcal{P}=\{1,\cdots,m\} $ a piecewise constant switching signal if there exists a time sequence $ t_{0}=0<t_{1}<t_{2}<\cdots $ with $ t_{k+1}-t_{k} \geq \tau>0 $ for certain  dwell time $ \tau$  and $ k\geq 0 $ so that during $ [t_{k},t_{k+1})$, $ \sigma (t)=i $ for some $ i \in \mathcal{P} $ and this graph $ \mathcal{G}_{i} $ is time-invariant. For convenience of analysis, $ \mathcal{L}^{\sigma(t)} $ represents the Laplacian matrix of the undirected graph $ \mathcal{G}^{\sigma(t)} $. 


\vspace*{1pt}
\begin{condition} \label{AssumptionGraph}
The fixed undirected graph $ \mathcal{G}$ is connected. 	
\end{condition}


\vspace*{1pt}
\begin{condition} \label{AssumptionGraphSwitching}
For the graph $ \mathcal{G}^{\sigma(t)} $,  there exists a subsequence $ \{ l_{k}\} $ of $ \{l: l=0,1,\cdots \} $ with $ t_{l_{k+1}}-t_{l_{k}} <\nu $ for some $ \nu>0 $ so that the union graph $ \cup^{l_{k+1}-1}_{j=l_{k}} \mathcal{G}_{\sigma(t_{j})}$ is connected.  	
\end{condition}   
   
\begin{remark}
Assumption \ref{AssumptionGraph} is standard in distributed NE seeking, while Assumption \ref{AssumptionGraphSwitching} is called jointly connected (e.g., see \cite{Huang12TAC} and \cite{Cai16TAC}). The assumption \ref{AssumptionGraphSwitching} allows the communication topology to be disconnected at any time instant.
\end{remark}

\section{Problem Formulation}
\vspace*{-2pt} 
\subsection{Non-cooperative Game over Networks} 
\vspace*{-2pt}  
In this paper, we consider a multi-agent network consisting of $N$ players, which form a N-player non-cooperative game defined as follows. For each agent $ i \in \mathcal{V}$, the $ i $th player aims to minimize its cost function $ J_{i}(x_{i},x_{-i}): \mathbb{R}^{n_{i}} \rightarrow \mathbb{R} $ by choosing its strategy $ x_{i} \in \mathbb{R}^{n_{i}} $, and $ x_{-i}=\text{col}(x_{1},\cdots,x_{i-1},\cdots,x_{N}) $ is the strategy profile of the whole strategy profile except for player $ i $. Let $ x= \\ (x_{i},x_{-i}) $ represent all players' action profile. Alternatively, let $ x \\ =  \text{col}(x_{1},\cdots,x_{N}) \in \mathbb{R}^{n} $, $ n=\sum_{i\in \mathcal{V}}n_{i} $.   


\begin{definition} \label{NEDefinition} \textit{(Nash equilibrium)}
A strategy profile $ x^{*}=(x^{*}_{i}, \\ x^{*}_{-i}) \in \mathbb{R}^{n}$ is said to be an Nash equilibrium of the game if 
\vspace*{-3pt}
\begin{equation}
J_{i}(x^{*}_{i},x^{*}_{-i}) \leq J_{i}(x_{i},x^{*}_{-i}), \ \text{for} \ \forall   x_{i} \in \mathbb{R}^{n_{i}}, \ i\in \mathcal{V}. \label{NECondition} 
\end{equation}
\end{definition} 

Condition (\ref{NECondition}) means that all players simultaneously take their own best (feasible) responses at the NE $ x^{*}$, where no player can unilaterally decrease its cost by changing its strategy. 


\begin{condition} \label{AssumptionConvex}
For each player $ i $, $ J_{i}(x_{i},x_{-i}) $ is $ \mathcal{C}^{2} $, strictly convex, and radially unbounded in $ x_{i} $ for each $ x_{-i} $. 
\end{condition}

Under Assumption \ref{AssumptionConvex}, it follows from \cite{Başar95} that an NE $ x^{*}$ exists, and satisfies $  \triangledown_{i} J_{i}(x^{*}_{i},x^{*}_{-i})=0_{n_{i}}$, and $ \triangledown_{i} J_{i}(x_{i},x_{-i})=\partial J_{i}(x_{i}, \\ x_{-i})/  \partial x_{i} \in \mathbb{R}^{n_{i}}$ represents the partial gradient of player $ i $'s cost with respect to its own action $ x_{i} $. We define 
\vspace*{-3pt}
\begin{equation}
F(x) \triangleq \text{col}( \triangledown_{1} J_{1}(x_{1},x_{-1}), \cdots,  \triangledown_{N} J_{N} (x_{N} \\ , x_{-N})), 
\end{equation} 
where $ F(x) \in \mathbb{R}^{n}$ denotes the pseudo-gradient (the stacked vector of all players' partial gradient). Thus, we have $ F(x^{*})=0_{n} $.

\begin{condition} \label{AssumptionSmooth}
The pesudogradient $ F $ is $ \varepsilon $-strongly monotone and $ \iota_{F} $-Lipschitz continuous for certain  constants $ \varepsilon, \iota_{F} >0 $.  
\end{condition}

\begin{remark}
Assumptions \ref{AssumptionConvex} and \ref{AssumptionSmooth} were widely used in existing works (e.g., \cite{Pavel19TAC,Liang17AT,Deng19TNNLS,Wang19ICCA,Pavel19AT,Pang20TAC}) to guarantee the unique NE $ x^{*} $. 
\end{remark}

\vspace*{-5pt} 
\subsection{Main Objective} 
This work aims to address a prescribed-time and fully distributed NE seeking problem of noncooperative games as follows.

\vspace*{2pt}
\begin{problem} \label{Problem}
(\textbf{Distributed NE Seeking in Prescribed-Time}) \\
Consider a non-cooperative game consisting of $ N $ players communicating over a communication network. Design a NE seeking algorithm such that all players can exactly reach the NE $ x^{*} $ with prescribed-time and fully distributed convergence features. 
\vspace*{-3pt}
\begin{equation} \label{P} 
\begin{split} 
& \text{minimize}   
\  J_{i}(x_{i}(t),x_{-i}(t)), \ x_{i}(t)\in \mathbb{R}^{n_{i}}, \ i\in \mathcal{V}, \\
&\text{subject to:} \ 
\dot{x}_{i}(t)=u_{i}(t),  \ \forall  t \in [0,T), \ T>0. 
\end{split}
\end{equation} 
\end{problem}

\begin{remark}
In contrast to existing works in \cite{Ye17TAC,Ye19TAC,Pavel19TAC,Liang17AT,Deng19TNNLS,Pavel19AT,Pang20TAC,Wang19ICCA,Persis19AT}, solving Problem 1 is much more challenging at least from the following aspects: \textit{(1) Prescribed-time convergence:} different from NE seeking results that guarantee a semi-global exponential convergence in \cite{Ye17TAC,Ye19TAC}, asymptotic or exponential convergence in  \cite{Pavel19TAC,Liang17AT,Deng19TNNLS}, linear convergence in \cite{Pavel19AT},  and UUB convergence in \cite{Pang20TAC}, it is desirable to solve Problem 1 in the  prescribed time (priori-given and user-defined) that is independent of any initial states, communication graphs, and control gains. \textit{(2) Player communication network:} the topology is jointly switching rather than the static graphs in   \cite{Ye17TAC,Ye19TAC,Pavel19TAC,Liang17AT,Deng19TNNLS,Pavel19AT,Pang20TAC,Wang19ICCA,Persis19AT}. 
\textit{(3) Design requirement:} propose a prescribed-time fully distributed NE seeking algorithm that does not require any global graph information and the Lipschitz and monotone constants of the pseudo-gradient. Due to aforementioned challenges, existing NE seeking algorithms cannot be directly applied.
\end{remark}

\section{Prescribed-Time Distributed NE Seeking} 
In distributed NE seeking games, each player $ i $ has no access to the full information of all players' strategies. Then, each agent $ i $ shall estimate all other players' strategies. Inspired by \cite{Pavel19TAC}, let each player combine its gradient-play dynamics with an auxiliary dynamics, i.e., implement the following dynamics:  
\vspace{-1pt}
\begin{equation}
\left\{ 
\begin{array}{c}
\hspace{-0.1em}
\dot{x}^{i}_{i}(t)=u^{i}_{i}(t), \ u^{i}_{i}(t)=-\triangledown_{i} J_{i}(x^{i}_{i}(t),\textbf{x}^{i}_{-i}(t)) + e^{i}_{i}(t),      \\ 
\hspace{-1.5em}
\dot{x}^{i}_{j}(t) =u^{i}_{j}(t),  \ u^{i}_{j}(t)=e^{i}_{j}(t), \ \forall  \ i, j \in \mathcal{V}, \ j\neq i,
\end{array}
\right. 
\label{ConsensusDesign}
\end{equation}
where player $ i $ maintains an estimate vector ${\textbf{x}}^{i}=\text{col}(x^{i}_{1}, \cdots, x^{i}_{i}, \\ \cdots, x^{i}_{N}) $ in which $ x^{i}_{j} $ is player $ i $'s estimate of player $ j $'s action, $ x^{i}_{i}=x_{i} $ is the player $ i $'s actual action, $ \textbf{x}^{i}_{-i} $ is the player $ i $' estimate vector without its own action, $ u^{i}_{i} =u_{i}$ is the player $ i $'s actual input, $ u^{i}_{j} $ is the other players' input, and $ e^{i}_{i}, e^{i}_{j} $ are to be developed. 
In (\ref{ConsensusDesign}), each player $ i $ updates $ x^{i}_{i} $ to reduce its own cost function and updates $ x^{i}_{j} $ to reach  consensus with the other players. In addition, each player $ i $ relies on its local estimated action $ \textbf{x}^{i}_{-i} $. 

For each player $ i $, (\ref{ConsensusDesign}) can be rewritten in a compact form
\vspace{-2pt} 
\begin{equation} 	
\dot{\textbf{x}}^{i}(t)=\textbf{u}^{i}(t), \  \textbf{u}^{i}(t)= -\mathcal{R}^{T}_{i} \triangledown_{i} J_{i}(\textbf{x}^{i}(t))  + \textbf{e}^{i}(t), \  i \in \mathcal{V},   
\label{NECompactDynamics} 
\end{equation} 
where $ \textbf{u}^{i} $ is the control input, $ \textbf{e}^{i}= \text{col}(e^{i}_{1},\cdots, e^{i}_{i}, \cdots,e^{i}_{N}) \in \mathbb{R}^{n}$ is a relative estimated error to be designed, and $ \mathcal{R}_{i} \in \mathbb{R}^{n_{i} \times n}$ used to align the gradient to action components, is a  matrix given by 
\vspace{-2pt} 
\begin{equation} 	
\mathcal{R}_{i}=\left [0_{n_{i}\times n_{1}} \cdots 0_{n_{i}\times n_{i-1}} \ I_{n_{i}\times n_{i}} \ 0_{n_{i}\times n_{i+1}}  \cdots 0_{ n_{i}\times n_{N} } \right].  
\label{NERmatrix} 
\end{equation}

\vspace{-10pt} 
\subsection{Prescribed-Time Distributed NE Seeking Design}
 
Before presenting the algorithm, the following lemma on a time transformation function is introduced to facilitate the design. 
 
\begin{lemma} \label{TimeFunction} \cite{Book14TimeFunction}
Consider a dynamical system that is described by $ \dot{x}(t)=f(t,x(t)) $ with $ x(0)=x_{0} $. Let $ \xi(t) $ denote the solution to this system and $ T>0 $ is the prior-given and user-defined time. Then, there exists a time transformation function $ t=\lambda(s), s \in [0,\infty) $ with $ \lambda(s)$ satisfying certain conditions:  
\vspace{-3pt}
\begin{subequations}\label{TFcondition}	
\begin{align}
 \lambda(0)=0, \lambda^{'}(0)=T \ - (\text{continuous differentiable}) & \label{TFC1} \\
  s_{1}>s_{2}\geq 0 \Rightarrow \lambda(s_{1})>\lambda(s_{2}) \ - (\text{strictly increasing}) & \label{TFC2} \\
 \lim_{s\rightarrow \infty} \lambda(s)=T, \lim_{s\rightarrow \infty} \lambda^{'}(s)=0 \ -  (\text{convergence in s}) &  \label{TFC3}
\end{align}
\end{subequations} 
so that for $ \psi(s) \triangleq \xi(t) $, we obtain  
\vspace{-2pt}
\begin{equation}
\psi^{'}(s)=\lambda^{'}(s)f(\lambda(s),\psi(s)), \ \psi(\lambda^{-1}(0) )=x_{0}, 
\end{equation}
where $ \psi^{'}(s)=d\psi(s)/ds $, $ \lambda^{'}(s)=d\lambda(s)/ds $ and $\lim_{s\rightarrow \infty} \psi(s)= \lim_{t\rightarrow T} \xi(t)$ for any $ t \in [0,T), s\in [0,\infty) $.
\end{lemma}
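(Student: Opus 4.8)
The plan is to treat this as a constructive existence result rather than something requiring a deep argument: I would exhibit one explicit time transformation $\lambda$, verify directly that it satisfies the three requirements (\ref{TFC1})--(\ref{TFC3}), and then obtain the transformed dynamics for $\psi$ by a single application of the chain rule. Since the stated conditions do not determine $\lambda$ uniquely, the essential point is only that \emph{some} smooth strictly increasing bijection of $[0,\infty)$ onto $[0,T)$ with the prescribed slope $T$ at the origin and vanishing derivative at infinity exists.

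For the construction I would take, for instance,
\[
\lambda(s) = T\bigl(1-e^{-s}\bigr), \qquad s \in [0,\infty),
\]
whose derivative is $\lambda'(s) = T e^{-s} > 0$. Then $\lambda(0)=0$ and $\lambda'(0)=T$ give (\ref{TFC1}); positivity of $\lambda'$ on $[0,\infty)$ yields strict monotonicity (\ref{TFC2}); and $\lambda(s)\to T$ together with $\lambda'(s)\to 0$ as $s\to\infty$ give (\ref{TFC3}). Because $\lambda$ is a $\mathcal{C}^{\infty}$ strictly increasing map from $[0,\infty)$ onto $[0,T)$, it is a bijection with a well-defined inverse, and in particular $\lambda^{-1}(0)=0$.

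With $\lambda$ fixed, I would set $\psi(s)\triangleq \xi(\lambda(s))$ and differentiate, using the chain rule together with $\dot{\xi}(t)=f(t,\xi(t))$:
\[
\psi'(s) = \lambda'(s)\,\dot{\xi}(\lambda(s)) = \lambda'(s)\, f\bigl(\lambda(s),\xi(\lambda(s))\bigr) = \lambda'(s)\, f\bigl(\lambda(s),\psi(s)\bigr),
\]
which is precisely the asserted transformed system. The initial condition follows from $\psi(\lambda^{-1}(0))=\psi(0)=\xi(\lambda(0))=\xi(0)=x_0$. Finally, since $\lambda(s)\to T^{-}$ monotonically as $s\to\infty$, continuity of $\xi$ on $[0,T)$ yields $\lim_{s\to\infty}\psi(s)=\lim_{t\to T^{-}}\xi(t)$, which establishes the correspondence between $s\to\infty$ on the stretched axis and $t\to T$ on the original axis.

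The step I expect to demand the most care is this last limit correspondence, since it implicitly presumes that $\xi$ exists throughout the half-open window $[0,T)$ and that its one-sided limit at $T$ is well defined, so that reparameterizing $[0,T)$ onto $[0,\infty)$ does not mask a blow-up. Everything else reduces to checking the explicit formula and one chain-rule computation; the conceptual content is simply that asymptotic stabilization as $s\to\infty$ on the transformed time axis is equivalent to prescribed-time convergence at $t=T$ on the original one, which is exactly what later sections will exploit in the Lyapunov analysis.
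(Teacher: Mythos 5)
Your proposal is correct and matches the paper's own treatment: the paper cites this lemma and then instantiates it with exactly the same explicit choice $\lambda(s)=T(1-e^{-s})$, verifying (\ref{TFC1})--(\ref{TFC3}) via $\lambda'(s)=Te^{-s}$ and obtaining the transformed dynamics by the same chain-rule computation. Your closing caveat about $\xi$ needing to exist on all of $[0,T)$ is a fair observation, but it does not diverge from the paper's argument, which likewise relies on the subsequent Lyapunov analysis on $s\in[0,\infty)$ to rule out blow-up.
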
	 
 
\vspace{1pt} 
In this work, the objective is to propose a prescribed-time NE seeking algorithm so that all players' estimates reach a consensus and converge to the NE in a time $ T $, and afterwards, this NE can be maintained for $ t \geq T $. To achieve this goal, we can choose the following time transformation function satisfying (\ref{TFcondition})   
\vspace{-2pt}
\begin{equation}
t=\lambda(s) \triangleq T(1-e^{-s}), \label{TimeTransformFunction} 
\end{equation}  
which implies that when $ s\rightarrow \infty $, $ t $ approaches  $T $  as shown in Fig. \ref{TimeTransFunction}, and $ \lambda(s) $ is continuously differentiable and strictly increasing, which satisfies (\ref{TFC1})-(\ref{TFC2}). Further, it can be  verified that $ \lambda^{'}(s)=Te^{-s} $ satisfies (\ref{TFC3}). According to (\ref{TimeTransformFunction}), the original time interval $ t\in [0,T) $ can thus be transformed into a new infinite-time interval $ s \in [0,\infty) $. Consequently, the associated stability analysis will be transformed to focus on this new time variable $ s\in [0,\infty)  $.


\begin{figure}[t!] 
	\centering
	\hspace*{-0.5em}
	\begin{tabular}{cc}	
		
		\hspace*{-1.2em}		
		\subfloat [$ \lambda(s) $] 
		{\includegraphics[width=4.8cm,height=3.5cm]{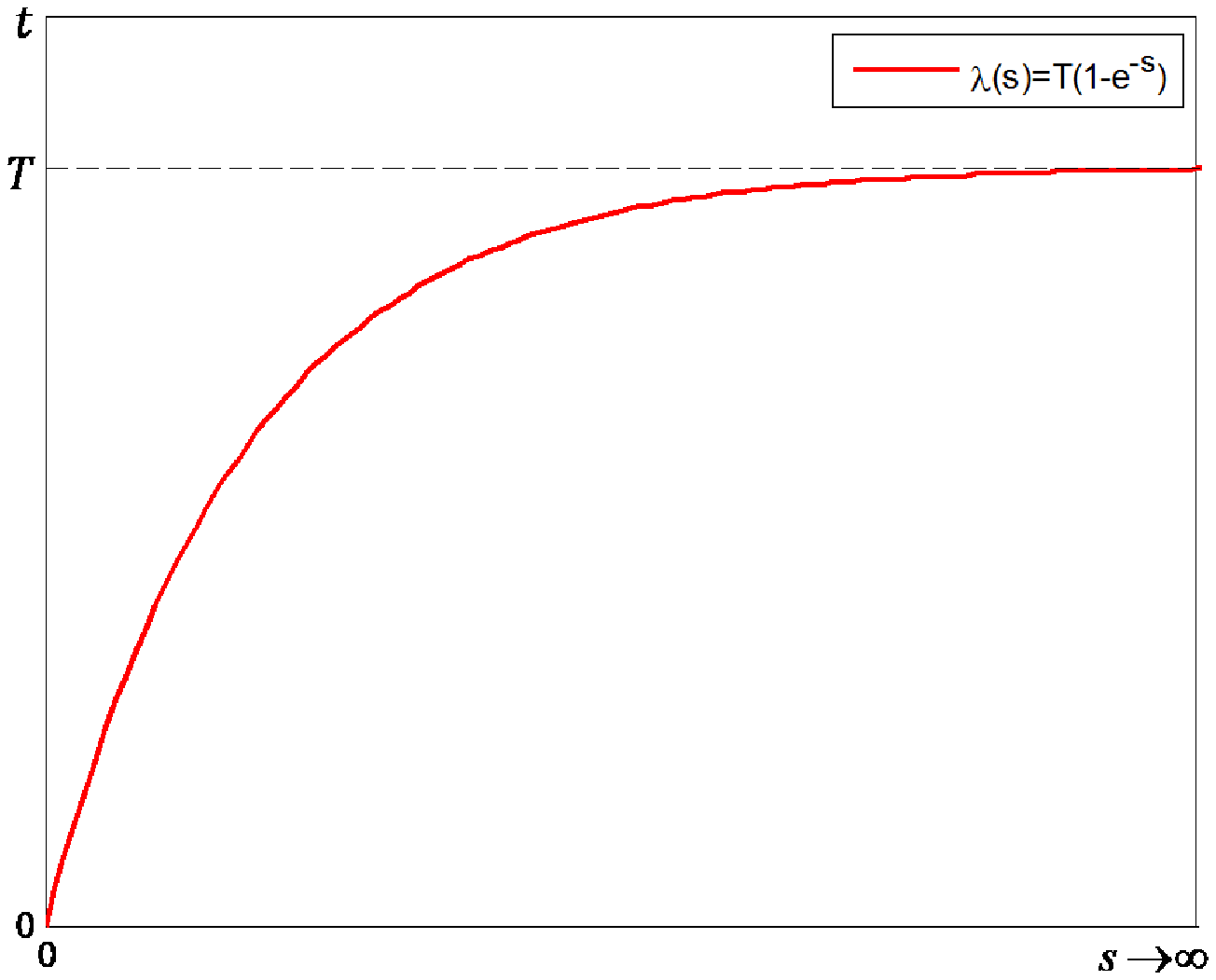}
			\label{T1}}
		
		\hspace*{-1.2em}		
		\subfloat [$ \lambda^{'}(s) $] 
		{\includegraphics[width=4.8cm,height=3.5cm]{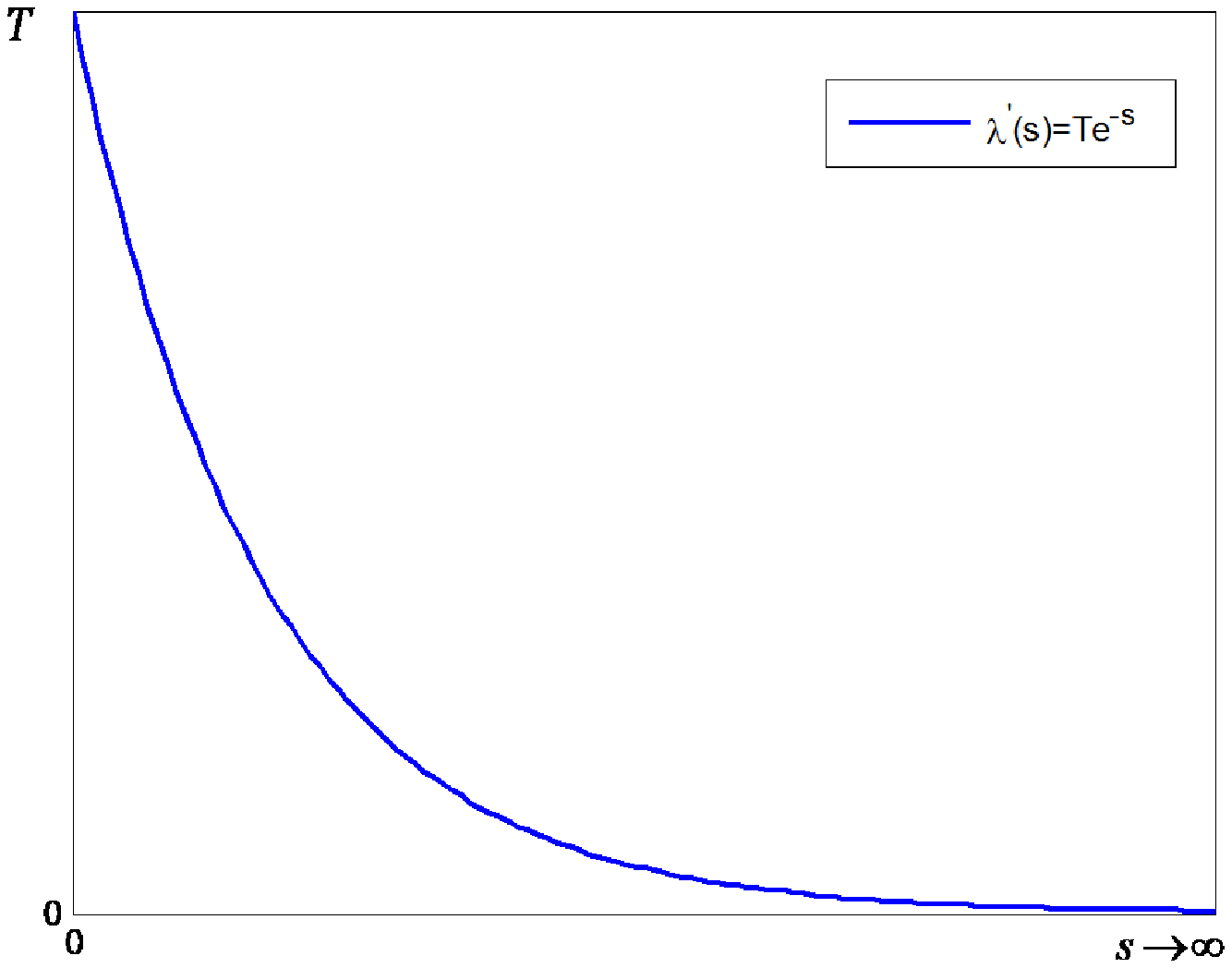}
			\label{T2}}
	\end{tabular}
	\vspace*{-3pt}
	\caption{The illustration of the time transformation function $ \lambda(s) $ and $ \lambda^{'}(s) $.}
	\label{TimeTransFunction}
\end{figure}

\textit{\textbf{Prescribed-time distributed NE seeking design}}: in light of (\ref{NECompactDynamics}), we present a new prescribed-time distributed NE seeking strategy so that the estimates of all players can not only reach a consensus, but also converge to the NE in a prior-given and user-defined $ T $, and this NE can be maintained for $ t \geq T $,  which is described by
\vspace{-3pt}
\begin{align} 
\dot{\textbf{x}}^{i}(t)&= (c+\frac{1}{T-t}) \textbf{u}^{i}(t), \  \textbf{u}^{i}(t)= -\mathcal{R}^{T}_{i} \triangledown_{i} J_{i}(\textbf{x}^{i}(t))  + \textbf{e}^{i}(t), \notag \\
\textbf{e}^{i}(t)&=-\kappa 
\sum\nolimits_{j = 1}^N  a_{ij} (\textbf{x}^{i}(t)-\textbf{x}^{j}(t)),  i, j  \in \mathcal{V},  t\in [0,T), 
\label{PrescribedTimeAlgorithm}
\end{align} 
where $c, \kappa>0$ are constant gains, and for $ t \geq T$, $\dot{\textbf{x}}^{i}(t)=c\textbf{u}^{i}(t) $.


\vspace{2pt} 
Next, denote the following stacked vectors and matrices 
\vspace{-3pt}
\begin{align}
\textbf{x} & =\text{col}(\textbf{x}^{1},\cdots,\textbf{x}^{N}), \  \mathcal{R}=\text{diag} \{ \mathcal{R}_{1}, \cdots,\mathcal{R}_{N} \},  \label{StackedVariable} \\
\textbf{e}&=\text{col}(\textbf{e}^{1},\cdots,\textbf{e}^{N}), \ \textbf{F(\textbf{x})} =\text{col}(\triangledown_{1} J_{1}(\textbf{x}^{1}),\cdots, \triangledown_{N} J_{N}(\textbf{x}^{N})). \notag 
\end{align}

Then, combining (\ref{NECompactDynamics}), (\ref{PrescribedTimeAlgorithm}) and (\ref{StackedVariable}) gives rise to the following closed-loop system in the sense of a compact form  
\vspace{-3pt}
\begin{equation} 
\dot{\textbf{x}}(t) = (c+\frac{1}{T-t}) \left[ 
-\mathcal{R}^{T} \textbf{ F}(\textbf{x}(t)) - \kappa (\mathcal{L}\otimes I_{n}) \textbf{x}(t) \right].  
\label{ClosedLoopSystem}   
\end{equation}

It can be seen that the choice of (\ref{TimeTransformFunction}) yields $d\lambda(s)/ds=T-t$, which appears in (\ref{PrescribedTimeAlgorithm}). By exploiting the transformation function in  (\ref{TimeTransformFunction}), the expression in (\ref{ClosedLoopSystem}) becomes that for $ s \in [0,\infty) $, 
\vspace{-3pt}
\begin{align} 
\psi^{'}(s) &= \lambda^{'}(s) (c+\frac{1}{T-\lambda(s) }) \left[ 
-\mathcal{R}^{T} \textbf{ F}(\psi(s)) - \kappa (\mathcal{L}\otimes I_{n}) \psi(s) \right], \notag \\
&=Te^{-s} (c+\frac{1}{T-\lambda(s) }) \left[ 
-\mathcal{R}^{T} \textbf{ F}(\psi(s)) - \kappa (\mathcal{L}\otimes I_{n}) \psi(s) \right] \notag \\
&=-\theta(s) [\mathcal{R}^{T} \textbf{ F}(\psi(s)) + \kappa (\mathcal{L}\otimes I_{n}) \psi(s)], 
\label{ClosedLoopSysteminSpace}   
\end{align}
where $\theta(s)=(1+cTe^{-s})>0 $ for any $ s \in [0,\infty) $.

The following result shows that the equilibrium of the system  (\ref{ClosedLoopSysteminSpace}) occurs when all players reach a consensus at the NE. 

\vspace{1pt}  
\begin{proposition} \label{proposition1}
Consider the game over $ \mathcal{G} $. Under Assumptions \ref{AssumptionGraph}, \ref{AssumptionConvex}, and  \ref{AssumptionSmooth}, $ \tilde{\textbf{x}}=1_{N}\otimes x^{*} $ is the NE of the noncooperative game if $ \textbf{F}(\tilde{\textbf{x}}) =0_{n}  $ (or $ \triangledown_{i} J_{i}(\tilde{\textbf{x}}^{i}) =0_{n_{i}} $). Further, at the NE,  the estimates of all players reach a consensus and equal to the NE $ x^{*} $, i.e., $ \tilde{\textbf{x}}^{i}=\tilde{\textbf{x}}^{j}= x^{*} $. 
That is, all players' action components coincide with the optimal actions ($ \tilde{\textbf{x}}^{i}_{i}= x^{*}_{i}$). 

\end{proposition}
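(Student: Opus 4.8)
The plan is to characterize the equilibria of the transformed closed-loop system (\ref{ClosedLoopSysteminSpace}) and show that the unique such equilibrium is exactly the consensus configuration $\tilde{\textbf{x}} = 1_{N}\otimes x^{*}$. Since $\theta(s) > 0$ for all $s$, any equilibrium $\psi^{*}$ must satisfy $\mathcal{R}^{T}\textbf{F}(\psi^{*}) + \kappa(\mathcal{L}\otimes I_{n})\psi^{*} = 0_{nN}$. I would prove the statement in two directions: first that $\tilde{\textbf{x}}$ is an equilibrium (sufficiency), and then that every equilibrium must coincide with it, which is the substantive part and simultaneously delivers the consensus claim.

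For sufficiency I would substitute $\psi^{*} = 1_{N}\otimes x^{*}$ directly. The coupling term vanishes because $(\mathcal{L}\otimes I_{n})(1_{N}\otimes x^{*}) = (\mathcal{L}1_{N})\otimes x^{*} = 0_{nN}$, using $\mathcal{L}1_{N} = 0_{N}$. For the gradient term, the $i$th block of $\mathcal{R}^{T}\textbf{F}(\tilde{\textbf{x}})$ is $\mathcal{R}^{T}_{i}\triangledown_{i}J_{i}(x^{*})$, and since $\textbf{F}(\tilde{\textbf{x}}) = 0_{n}$ is precisely $\triangledown_{i}J_{i}(x^{*}) = 0_{n_{i}}$ for every $i$ (equivalently $F(x^{*}) = 0_{n}$, the first-order NE condition guaranteed under Assumption \ref{AssumptionConvex}), this term vanishes as well. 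Hence $\tilde{\textbf{x}}$ is an equilibrium; by construction all estimate vectors coincide, $\tilde{\textbf{x}}^{i} = x^{*}$, so in particular the action components satisfy $\tilde{\textbf{x}}^{i}_{i} = x^{*}_{i}$.

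The main work is the converse. Given an arbitrary equilibrium $\psi^{*} = \text{col}(\psi^{1,*},\cdots,\psi^{N,*})$, I would left-multiply the equilibrium equation by $1_{N}^{T}\otimes I_{n}$. Because $\mathcal{L}$ is the Laplacian of an undirected graph, $1_{N}^{T}\mathcal{L} = 0_{N}^{T}$ annihilates the coupling term, leaving $\sum_{i=1}^{N}\mathcal{R}^{T}_{i}\triangledown_{i}J_{i}(\psi^{i,*}) = 0_{n}$. The crux is the selection structure of $\mathcal{R}$: each summand $\mathcal{R}^{T}_{i}\triangledown_{i}J_{i}(\psi^{i,*})$ lands in the disjoint $i$th action slot of $\mathbb{R}^{n}$ and $\mathcal{R}^{T}_{i}$ is injective, so the sum vanishes block-by-block and forces $\triangledown_{i}J_{i}(\psi^{i,*}) = 0_{n_{i}}$ for every $i$, that is $\mathcal{R}^{T}\textbf{F}(\psi^{*}) = 0_{nN}$. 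Substituting this back leaves $\kappa(\mathcal{L}\otimes I_{n})\psi^{*} = 0_{nN}$; since $\kappa > 0$ and $\mathcal{G}$ is connected (Assumption \ref{AssumptionGraph}), $\ker(\mathcal{L}\otimes I_{n}) = \{1_{N}\otimes v : v\in\mathbb{R}^{n}\}$, so $\psi^{*} = 1_{N}\otimes\bar{x}$ for some $\bar{x}$, which is exactly the consensus of all estimates.

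Finally I would pin down $\bar{x}$. Consensus together with the block-wise condition gives $\triangledown_{i}J_{i}(\bar{x}) = 0_{n_{i}}$ for all $i$, i.e.\ $F(\bar{x}) = 0_{n}$. Under Assumption \ref{AssumptionSmooth}, $F$ is $\varepsilon$-strongly monotone and hence has at most one zero: if $F(a) = F(b) = 0_{n}$ then $\varepsilon\|a-b\|^{2} \leq (a-b)^{T}(F(a)-F(b)) = 0$, forcing $a = b$. Since $x^{*}$ is a zero of $F$, uniqueness yields $\bar{x} = x^{*}$, whence $\psi^{*} = 1_{N}\otimes x^{*} = \tilde{\textbf{x}}$. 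I expect the injectivity and disjoint-slot argument that isolates each player's own partial gradient from the averaged equation to be the step needing the most care, as it is what converts the Laplacian-nullspace projection into the per-player stationarity conditions; the strong-monotonicity uniqueness argument is then routine.
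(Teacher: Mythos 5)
Your proof is correct and follows essentially the same route as the paper's: left-multiply the equilibrium equation by $1_{N}^{T}\otimes I_{n}$ so that $1_{N}^{T}\mathcal{L}=0_{N}^{T}$ annihilates the coupling term, use the selection structure of $\mathcal{R}$ to recover the per-player stationarity conditions $\triangledown_{i}J_{i}(\psi^{i})=0_{n_{i}}$, invoke $\ker(\mathcal{L}\otimes I_{n})$ under connectedness to force consensus $\psi = 1_{N}\otimes \bar{x}$, and conclude $\bar{x}=x^{*}$ by uniqueness of the NE. You merely make explicit three points the paper leaves implicit — the sufficiency direction, the disjoint-slot/injectivity argument behind ``it follows from the notations of $\mathcal{R}$ and $\textbf{F}$,'' and the strong-monotonicity argument for uniqueness of the zero of $F$ — which is sound elaboration rather than a different approach.
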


\begin{proof}
let $ \tilde{\textbf{x}} $ be an equilibrium of the system. 
Then, for $ \tilde{\psi} = \tilde{\textbf{x}}$, if follows from (\ref{ClosedLoopSysteminSpace}) that $ 0_{Nn}=-\mathcal{R}^{T} \textbf{ F}(\tilde{\psi}) - \kappa (\mathcal{L}\otimes I_{n}) \tilde{\psi} $ as $ \theta(s)>0 $, 
which implies that multiplying $ 1^{T}_{N} \otimes I_{n}$ yields  
\vspace{-2pt}
\begin{equation}
0_{n}=- (1^{T}_{N} \otimes I_{n}) \mathcal{R}^{T} \textbf{ F}(\tilde{\psi}) - \kappa (1^{T}_{N} \otimes I_{n}) (\mathcal{L}\otimes I_{n})\tilde{\psi}.   \label{NE4}
\end{equation}  
	
Since $ 1^{T}_{N} \mathcal{L} =0^{T}_{Nn}$ under Assumption \ref{AssumptionGraph}, we obtain $ 0_{n}=(1^{T}_{N} \otimes I_{n}) \mathcal{R}^{T}  \textbf{F}(\tilde{\psi})  $. Then, it follows from the notations of $ \mathcal{R} $ and $ \textbf{F} $ in (\ref{StackedVariable}) that $ \textbf{F}(\tilde{\psi}) =0_{n}  $. Then, submitting it into (\ref{NE4}) gives rise to $ (\mathcal{L}\otimes I_{n}) \tilde{\psi}=0_{Nn} $. Hence, there exists certain $ \theta \in \mathbb{R}^{n} $ such that $ \tilde{\psi}=1_{N}\otimes \theta  $ under Assumption \ref{AssumptionGraph}. Then, it has $ \textbf{F}(1_{N}\otimes \theta) =0_{n}  $ for each player $ i \in \mathcal{V} $. Thus, $ \triangledown_{i} J_{i}(\theta_{i},\theta_{-i}) =0_{n_{i}}$. That is, $ \theta  $ is a unique NE of the game and $ \theta=x^{*} $. Thus, $ \tilde{\textbf{x}}=1_{N}\otimes x^{*} $ and for $i, j \in \mathcal{V}$,  we have $ \tilde{\textbf{x}}^{i}=\tilde{\textbf{x}}^{j}= x^{*} $ (NE of the game).
\end{proof}	
 
Next, we present the main result on the distributed NE seeking of the noncooperative game in a prescribed time.   

\vspace{1pt} 
\begin{theorem} \label{Theorem1}
Suppose that Assumptions \ref{AssumptionGraph}, \ref{AssumptionConvex} and \ref{AssumptionSmooth} hold. Then, given a graph $ \mathcal{G} $ and any initial $ \textbf{x}_{i}(0)$, the proposed prescribed-time distributed NE seeking algorithm in (\ref{PrescribedTimeAlgorithm})   guarantees that the estimates of all players converges to the NE in the user-defined time $ T $, i.e., $ \lim_{t\rightarrow T} \textbf{x}(t) =  \tilde{\textbf{x}}=1_{N}\otimes x^{*}$, provided that   
\vspace{-4pt}
\begin{equation}
\kappa > (\iota^{2}/\varepsilon +\iota ) / \lambda_{2} (\mathcal{L}), \ \forall \iota, \varepsilon>0. \label{GainCondition}
\end{equation}
\end{theorem}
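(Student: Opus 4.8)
The plan is to convert the prescribed-time statement into an asymptotic one via the time transformation of Lemma~\ref{TimeFunction}, and then prove the asymptotic statement by a Lyapunov argument closed out with LaSalle's invariance principle. Under $t=\lambda(s)=T(1-e^{-s})$ the closed loop (\ref{ClosedLoopSystem}) becomes (\ref{ClosedLoopSysteminSpace}), i.e. $\psi'(s)=-\theta(s)[\mathcal{R}^{T}\textbf{F}(\psi(s))+\kappa(\mathcal{L}\otimes I_{n})\psi(s)]$ with $\theta(s)=1+cTe^{-s}\geq 1$ for all $s\geq 0$. Since Lemma~\ref{TimeFunction} gives $\lim_{t\rightarrow T}\textbf{x}(t)=\lim_{s\rightarrow\infty}\psi(s)$, it suffices to show $\psi(s)\rightarrow\tilde{\textbf{x}}=1_{N}\otimes x^{*}$ as $s\rightarrow\infty$. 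By Proposition~\ref{proposition1}, $\tilde{\textbf{x}}$ is the unique equilibrium of (\ref{ClosedLoopSysteminSpace}) and obeys both $\mathcal{R}^{T}\textbf{F}(\tilde{\textbf{x}})=0_{n}$ and $(\mathcal{L}\otimes I_{n})\tilde{\textbf{x}}=0_{n}$.

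First I would set $\tilde{\psi}=\psi-\tilde{\textbf{x}}$ and take the radially unbounded candidate $V=\tfrac{1}{2}\tilde{\psi}^{T}\tilde{\psi}$. Differentiating along (\ref{ClosedLoopSysteminSpace}) and subtracting the two equilibrium identities yields
\begin{equation}
V'(s)=-\theta(s)\,\big[\,\tilde{\psi}^{T}\mathcal{R}^{T}(\textbf{F}(\psi)-\textbf{F}(\tilde{\textbf{x}}))+\kappa\,\tilde{\psi}^{T}(\mathcal{L}\otimes I_{n})\tilde{\psi}\,\big],\notag
\end{equation}
where $(\mathcal{L}\otimes I_{n})\tilde{\textbf{x}}=0$ was used to replace $\psi$ by $\tilde{\psi}$ in the coupling term. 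I would decompose $\tilde{\psi}=\tilde{\psi}_{\parallel}+\tilde{\psi}_{\perp}$, where $\tilde{\psi}_{\parallel}=1_{N}\otimes(\bar{y}-x^{*})$ is the consensus component ($\bar{y}$ the averaged estimate) and $\tilde{\psi}_{\perp}=\psi-1_{N}\otimes\bar{y}$ is the disagreement component. Under Assumption~\ref{AssumptionGraph}, $\mathcal{L}\geq 0$ with kernel $\mathrm{span}\{1_{N}\}$, so the coupling term is lower bounded through the algebraic connectivity as $\tilde{\psi}^{T}(\mathcal{L}\otimes I_{n})\tilde{\psi}\geq\lambda_{2}(\mathcal{L})\|\tilde{\psi}_{\perp}\|^{2}$, and $\theta(s)\geq 1$ only strengthens the sign.

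The decisive term is the pseudo-gradient cross term. The key structural fact is that on the consensus subspace $\textbf{F}(1_{N}\otimes y)$ restricts to the true pseudo-gradient $F(y)$ and $\mathcal{R}(1_{N}\otimes v)=v$, so that $\tilde{\psi}_{\parallel}^{T}\mathcal{R}^{T}(\textbf{F}(1_{N}\otimes\bar{y})-\textbf{F}(\tilde{\textbf{x}}))=(\bar{y}-x^{*})^{T}(F(\bar{y})-F(x^{*}))\geq\varepsilon\|\bar{y}-x^{*}\|^{2}$ by the $\varepsilon$-strong monotonicity of Assumption~\ref{AssumptionSmooth}. The remaining pieces, obtained by adding and subtracting $\textbf{F}(1_{N}\otimes\bar{y})$, are controlled by the $\iota$-Lipschitz continuity: $\|\textbf{F}(\psi)-\textbf{F}(1_{N}\otimes\bar{y})\|\leq\iota\|\tilde{\psi}_{\perp}\|$ and $\|\textbf{F}(1_{N}\otimes\bar{y})-\textbf{F}(\tilde{\textbf{x}})\|\leq\iota\|\tilde{\psi}_{\parallel}\|$, producing mixed terms of order $\iota\|\tilde{\psi}_{\parallel}\|\|\tilde{\psi}_{\perp}\|$ and a pure term $\iota\|\tilde{\psi}_{\perp}\|^{2}$. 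Splitting the mixed products by Young's inequality and spending exactly the strong-monotonicity reserve $\varepsilon\|\tilde{\psi}_{\parallel}\|^{2}$ to balance them is what generates the $\iota^{2}/\varepsilon$ coefficient, while the pure-disagreement Lipschitz term supplies the linear $\iota$. I expect this term to be the main obstacle, precisely because $\textbf{F}$ is evaluated at each player's \emph{local} estimate rather than at the common action, so the add-and-subtract must be arranged so that strong monotonicity is used only along the consensus direction while all Lipschitz slack is absorbed by the connectivity margin.

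Collecting the bounds, I expect to reach
\begin{equation}
V'(s)\leq-\theta(s)\,\big(\kappa\lambda_{2}(\mathcal{L})-\iota^{2}/\varepsilon-\iota\big)\,\|\tilde{\psi}_{\perp}\|^{2},\notag
\end{equation}
so condition (\ref{GainCondition}) makes the coefficient positive and hence $V'(s)\leq 0$, with $V'$ vanishing exactly on the consensus set $\{\tilde{\psi}_{\perp}=0\}$. Because this is only negative semidefinite, I would finish with LaSalle: on $\{\tilde{\psi}_{\perp}=0\}$ the drift $-\theta(s)\mathcal{R}^{T}\textbf{F}(1_{N}\otimes\bar{y})$ leaves the consensus subspace unless $F(\bar{y})=0_{n}$, i.e. $\bar{y}=x^{*}$, so the largest invariant set contained in $\{V'=0\}$ is $\{\tilde{\psi}=0\}$. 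The factor $\theta(s)\geq 1$ with $\theta(s)\rightarrow 1$ keeps $V$ nonincreasing and its sublevel sets compact and invariant, which legitimizes the (Barbalat/LaSalle-type) conclusion $\tilde{\psi}(s)\rightarrow 0$ as $s\rightarrow\infty$. Transporting this back through Lemma~\ref{TimeFunction} yields $\lim_{t\rightarrow T}\textbf{x}(t)=\tilde{\textbf{x}}=1_{N}\otimes x^{*}$; and for $t\geq T$ the dynamics $\dot{\textbf{x}}^{i}=c\textbf{u}^{i}$ keep the state at this equilibrium, completing the prescribed-time claim.
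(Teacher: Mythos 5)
Your proposal is correct and follows the paper's framework almost step for step --- same time transformation via Lemma \ref{TimeFunction}, same Lyapunov candidate $V=\frac{1}{2}\|\psi-\tilde{\psi}\|^{2}$, same orthogonal consensus/disagreement splitting (your $\tilde{\psi}_{\parallel},\tilde{\psi}_{\perp}$ are exactly the paper's $\overrightarrow{\psi}-\tilde{\psi}$ and $\overleftarrow{\psi}$), and the same add-and-subtract of $\textbf{F}(1_{N}\otimes\bar{y})$ so that strong monotonicity acts only along the consensus direction while Lipschitz slack is absorbed by $\lambda_{2}(\mathcal{L})$. The genuine difference is the endgame. You spend the entire monotonicity reserve through Young's inequality, arriving at $V'\leq-\theta(s)\bigl(\kappa\lambda_{2}(\mathcal{L})-\iota^{2}/\varepsilon-\iota\bigr)\|\tilde{\psi}_{\perp}\|^{2}$, which is only negative \emph{semi}definite, so your LaSalle step is load-bearing: you must argue that the only invariant subset of the consensus set is $\{\tilde{\psi}=0\}$ (your observation that $\mathcal{R}_{i}^{T}\triangledown_{i}J_{i}(\bar{y})$ is supported only on the $i$-th action slot, forcing $F(\bar{y})=0_{n}$, is the right one). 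The paper instead keeps the $2\times 2$ quadratic form in $(\|\overrightarrow{\psi}-\tilde{\psi}\|,\|\overleftarrow{\psi}\|)$ and imposes positive definiteness of the matrix $O$ --- which yields exactly the same threshold $\kappa>(\iota^{2}/\varepsilon+\iota)/\lambda_{2}(\mathcal{L})$ --- but thereby obtains a negative \emph{definite} bound $V'\leq-\lambda_{\min}(O)\theta(s)V$, so its citation of LaSalle is essentially decorative. This trade-off matters downstream: the paper's definite bound gives the exponential estimate (\ref{L12}) in $s$, which is precisely what Theorem \ref{The1Boundness} reuses to prove boundedness and $\mathcal{C}^{1}$ smoothness of the control and the maintenance of the NE on $[T,\infty)$; your semidefinite version loses that rate, and your invariance argument also needs a word of justification because (\ref{ClosedLoopSysteminSpace}) is nonautonomous --- classical LaSalle does not apply directly, and you would have to invoke an asymptotically-autonomous invariance theorem (since $\theta(s)\rightarrow 1$) or a two-stage Barbalat argument ($\tilde{\psi}_{\perp}\rightarrow 0$ first, then a perturbed strongly monotone flow argument for $\bar{\psi}\rightarrow x^{*}$). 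With that caveat made explicit, your route closes the theorem as stated; the paper's route is slightly stronger in what it delivers for the subsequent results.
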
 
 
\vspace{-2pt} 
\begin{proof} 
we first make a coordinate transformation as 
\vspace{-3pt}
\begin{align}
\overrightarrow{\psi}(s)  &= (1_{N} \otimes  \mathcal{S} ) \psi(s) \in \mathbb{R}^{Nn} , \ \mathcal{S} =  \frac{1}{N} (1^{T}_{N} \otimes I_{n}), \label{Xright} \\
\overleftarrow{\psi}(s)  &= (\mathcal{T} \otimes I_{n}) \psi(s) \in \mathbb{R}^{Nn}, \ \mathcal{T}=I_{N}-\frac{1}{N} (1_{N }1^{T}_{N} ). \label{Xleft}
\end{align}

\vspace*{-3pt}
Then, it follows from (\ref{Xright}) that the average estimate of $ \psi^{i}(s) $ is described by $ \bar{\psi}(s)=\frac{1}{N}\sum^{N}_{i=1}\psi^{i}(s)= \frac{1}{N} (1^{T}_{N} \otimes I_{n})\psi(s)=\mathcal{S}\psi(s) $. For any $ \psi(s)  \in \mathbb{R}^{Nn}$, it can be decomposed as  $\psi(s) = \overrightarrow{\psi}(s) + \overleftarrow{\psi}(s) $ with $ (\overrightarrow{\psi}(s))^{T}\overleftarrow{\psi}(s) =0 $ and $ (\mathcal{L} \otimes I_{n}) \overrightarrow{\psi}(s) = 0_{Nn} $ using the strongly connected condition in Assumption \ref{AssumptionGraph}. 

For stability analysis, we select the Lyapunov function as
\vspace*{-1pt}
\begin{align}
V(\psi(s))&= \frac{1}{2} (\psi(s)-\tilde{\psi})^{T}(\psi(s)-\tilde{\psi}) \notag \\
& = \frac{1}{2} (\overrightarrow{\psi}(s) + \overleftarrow{\psi}(s)-\tilde{\psi})^{T}(\overrightarrow{\psi}(s) + \overleftarrow{\psi}(s)-\tilde{\psi}) \notag \\
&=\frac{1}{2} 
\left[
\begin{array}{l}  
\overrightarrow{\psi}(s) - \tilde{\psi}   \\ 
\ \ \  \overleftarrow{\psi}(s) 
\end{array}
\right]^{T}  \left[
\begin{array}{l}  
\overrightarrow{\psi}(s) - \tilde{\psi}   \\ 
\ \ \  \overleftarrow{\psi}(s) 
\end{array} 
\right],
\label{LyapunovFunction1}
\end{align}%
where 
$ (\overrightarrow{\psi}(s))^{T}\overleftarrow{\psi}(s) =0 $ and $ (\mathcal{L} \otimes I_{n}) \overrightarrow{\psi}(s) = 0_{Nn} $ are used.

Then, differentiating $ V(\psi(s)) $ with respect to $ s $  yields 
\vspace*{-1pt}
\begin{equation}
V^{'}(\psi(s)) = - \theta(s)(\psi(s)-\tilde{\psi})^{T} [\mathcal{R}^{T}  \textbf{F} (\psi(s)) + \kappa (\mathcal{L}\otimes I_{n})\psi(s) ]. \notag 
\end{equation} 

Since $ 0_{Nn}=- \mathcal{R}^{T} \textbf{F}(\tilde{\psi}) - \kappa (\mathcal{L}\otimes I_{n}) \tilde{\psi} $, the above expression can be further rewritten as 
\vspace*{-1pt}
\begin{equation}
V^{'}(\psi)= -\theta(s) (\psi-\tilde{\psi})^{T} [\mathcal{R}^{T}  (\textbf{F} (\psi)-\textbf{F}(\tilde{\psi})) + \kappa (\mathcal{L}\otimes I_{n}) (\psi-\tilde{\psi}) ]. \notag 
\end{equation} 

In light of $ \tilde{\psi}=1_{N}\otimes x^{*} $, $ \psi=\overrightarrow{\psi}+ \overleftarrow{\psi} $, and $ (\mathcal{L} \otimes I_{n}) \overrightarrow{\psi}= 0_{Nn} $ under Assumption \ref{AssumptionGraph}, then the first term in $ V^{'}(\psi) $  becomes
\vspace*{-2pt}
\begin{align}
&-   (\psi-\tilde{\psi})^{T} \mathcal{R}^{T} [  \textbf{F} (\psi)-\textbf{F}(\tilde{\psi})]   \label{L3} \\
&= -(\overleftarrow{\psi})^{T} \mathcal{R}^{T}  [ \textbf{F} (\psi)-\textbf{F}(\overrightarrow{\psi}) ]  - (\overleftarrow{\psi})^{T} \mathcal{R}^{T}[  \textbf{F}(\overrightarrow{\psi})-\textbf{F}(\tilde{\psi})]  \notag \\
&   - ( \overrightarrow{\psi} - \tilde{\psi})^{T} \mathcal{R}^{T} [ \textbf{F} (\psi)-\textbf{F}(\overrightarrow{\psi})]   -  ( \overrightarrow{\psi} - \tilde{\psi})^{T} \mathcal{R}^{T} [  \textbf{F}(\overrightarrow{\psi})-\textbf{F}(\tilde{\psi})]. \notag 
\end{align}

It follows from Assumption \ref{AssumptionSmooth} that according to the $ \iota_{F} $-Lipschitz continuity of $ F $, it yields that $ \|F(\psi)-F(\overrightarrow{\psi})  \| \leq \iota_{F} \| \overleftarrow{\psi} \| $. Further, $ \| \textbf{F} (\psi)-\textbf{F}(\overrightarrow{\psi})  \| \leq \iota_{\textbf{F}} \| \overleftarrow{\psi} \| $ for certain scalar $ \iota_{\textbf{F}}>0 $. In addition, since $ \| \mathcal{R}^{T} \|=1 $, $ \textbf{F}(\overrightarrow{\psi})=F(\bar{\psi})$,  and $  \textbf{F}(\tilde{\psi})=F(x^{*})=0 $, 
\vspace*{-2pt}
\begin{align}
&-  (\overleftarrow{\psi})^{T} \mathcal{R}^{T}[  \textbf{F}(\overrightarrow{\psi})-\textbf{F}(\tilde{\psi})] \label{L4} \\
&  =-  (\overleftarrow{\psi})^{T} \mathcal{R}^{T} (F(\bar{\psi}) - F(x^{*}))  \leq \iota_{F} T \| \overleftarrow{\psi} \| \| \bar{\psi}- x^{*}\|, \notag \\ 
&-  ( \overrightarrow{\psi} - \tilde{\psi})^{T} \mathcal{R}^{T} [ \textbf{F} (\psi)-\textbf{F}(\overrightarrow{\psi})] \label{L5}  \\  
&= -   (\bar{\psi}-x^{*})^{T} (\textbf{F} (\psi)-\textbf{F}(\overrightarrow{\psi}))   \leq \iota_{\textbf{F}} T \|\bar{\psi}-x^{*} \| \| \overleftarrow{\psi}\|, \notag 
\end{align} 
where the fact that $ \mathcal{R} \overrightarrow{\psi}=\bar{\psi} $ and $ \mathcal{R} \tilde{\psi}=x^{*} $ is used, and exploiting the $ \varepsilon $-strong monotonicity of $ F $, we can further  have that  
\vspace*{-2pt}
\begin{align}
& - ( \overrightarrow{\psi} - \tilde{\psi})^{T} \mathcal{R}^{T} [  \textbf{F}(\overrightarrow{\psi})-\textbf{F}(\tilde{\psi})] \notag \\
&  =-  (\bar{\psi}-x^{*})^{T}(F(\bar{\psi})-F(x^{*})) \leq -\varepsilon   \| \bar{\psi}-x^{*} \|^{2}. \label{L6}
\end{align} 

In addition, the second term in (\ref{L2}) can be rewritten as
\vspace*{-3pt}
\begin{align}
&-(\psi-\tilde{\psi})^{T}(\mathcal{L} \otimes I_{n}) (\psi-\tilde{\psi})  = -(\overrightarrow{\psi}+\overleftarrow{\psi} )^{T}(\mathcal{L}\otimes I_{n}) (\overrightarrow{\psi}+\overleftarrow{\psi})     \notag \\
& = - \overleftarrow{\psi}^{T} (\mathcal{L} \otimes I_{n}) \overleftarrow{\psi}  \leq - \lambda_{2} (\mathcal{L}) \| \overleftarrow{\psi} \|^{2}.
\label{L7}
\end{align} 

Let $ \iota=\max\{\iota_{F},\iota_{\textbf{F}}\} $. Substituting (\ref{L4})-(\ref{L7}) into (\ref{L2}) gives 
\vspace*{-4pt}
\begin{align}
\hspace{-0.3em}
V^{'} & \leq \theta [ 2 \iota \| \overleftarrow{\psi} \| \| \bar{\psi} -x^{*} \|  -  (\kappa \lambda_{2} ( \hat{\mathcal{L}}) - \iota ) \| \overleftarrow{\psi} \|^{2}  - \varepsilon \|\bar{\psi} -x^{*} \|^{2} ] \notag \\ 
& = - \theta \left[
\begin{array}{l}  
\|\bar{\psi}-x^{*}\|  \\ 
\ \ \|\overleftarrow{\psi} \|
\end{array}
\right]^{T}      
\left[
\begin{array}{cc} 
\varepsilon   & - \iota \\
-\iota  & \kappa \lambda_{2} (\mathcal{L}) - \iota 
\end{array}
\right]
\left[
\begin{array}{l}  
\| \bar{\psi}-x^{*} \|  \\ 
\ \  \|\overleftarrow{\psi} \|
\end{array}
\right], \notag \\
& = - \frac{\theta}{2} \left[
\begin{array}{l}  
\|\overrightarrow{\psi}-\tilde{\psi} \| \\ 
\ \  \|\overleftarrow{\psi} \|
\end{array}
\right]^{T}     
O
\left[
\begin{array}{l}  
\|\overrightarrow{\psi}-\tilde{\psi} \| \\ 
\ \  \|\overleftarrow{\psi} \|
\end{array}
\right],
\label{L8}
\end{align} 
where the fact that $ \|\bar{\psi}-x^{*}\|=\frac{1}{\sqrt{N}} \|\overrightarrow{\psi}-\tilde{\psi} \| $ is used, $ \theta>0 $, and $O = 2 \left[
\begin{array}{cc} 
\frac{\varepsilon }{N} & - \frac{\iota}{\sqrt{N}} \\
-\frac{\iota}{\sqrt{N}} & \kappa  \lambda_{2} (\mathcal{L}) - \iota 
\end{array}
\right] >0 $ if $ \kappa > \frac{1}{\lambda_{2} (\mathcal{L}) } (\frac{\iota^{2}}{\varepsilon } +\iota) $. 	

Since $V( \psi(s) ) $ in (\ref{LyapunovFunction1}) is bounded on the interval $ s\in [0,\infty) $ and $ V^{'}( \psi(s) ) < 0 $ by (\ref{L8}), the closed-loop system in (\ref{ClosedLoopSysteminSpace}) is globally stable for any $ \psi(0) $ when $ s \rightarrow \infty $. Hence, the estimate states $  \overrightarrow{\psi}(s)-\tilde{\psi} $ and $\overleftarrow{\psi}(s)$ are bounded, and converge a largest invariant set $ \mathcal{M}= \{\overrightarrow{\psi}(s)=\tilde{\psi} \ \text{and} \ \overleftarrow{\psi}(s) = 0_{Nn}\} $ based on the LaSalle’s invariance principle \cite{BookKhail}. Thus, on this invariant set,   $ \lim_{t\rightarrow \infty} (\overrightarrow{\psi}(s)-\tilde{\psi})=0_{Nn}$ and $ \lim_{t\rightarrow \infty} \overleftarrow{\psi}(s) = 0_{Nn}$. According to the coordinate transformation $ \overrightarrow{\psi}(s) $ and $ \overleftarrow{\psi}(s) $ in (\ref{Xright})-(\ref{Xleft}), and the fact that $ \psi(s)= \overrightarrow{\psi}(s)+ \overleftarrow{\psi(s)}$, we obtain $ \lim_{t\rightarrow \infty} (\psi(s)-\tilde{\psi})=0_{Nn}$. 
Finally, based on the time transformation function in (\ref{TimeTransformFunction}) with the fact that $ t\rightarrow T $ as $ s\rightarrow\infty $, and $ \psi(s) \triangleq \textbf{x}(t) $ with $ \textbf{x}(t) $ being the solution to (\ref{ClosedLoopSystem}), we conclude that $ \lim_{t\rightarrow T} \tilde{\textbf{x}} = 1 \otimes x^{*} $. Thus, Problem \ref{Problem} is solved in a prescribed-time $T$. 
\end{proof}

\vspace{2pt} 
Next, we show that the NE can be maintained and the control input signal remains zero over $[T, \infty)$. Moreover, the this control signal remains $ \mathcal{C}^{1} $ smooth and uniformly bounded over the whole time interval $[0, \infty)$. The following theorem summaries this result.  
 
\vspace{2pt} 
\begin{theorem} \label{The1Boundness}
Consider the game over $ \mathcal{G} $. Under Assumptions \ref{AssumptionGraph}, \ref{AssumptionConvex} and \ref{AssumptionSmooth}, the proposed prescribed-time distributed NE seeking algorithm guarantees that the NE is  maintained and the control signal remains zero for $ t \geq T $. Further, the control signal remains $ \mathcal{C}^{1} $ smooth and uniformly bounded for any $t\in [0, \infty)$.  	
\end{theorem}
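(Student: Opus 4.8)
The plan is to split the time axis at $t=T$ and treat $[0,T)$, the point $t=T$, and $(T,\infty)$ in turn. Write $h(\textbf{x})\triangleq-\mathcal{R}^{T}\textbf{F}(\textbf{x})-\kappa(\mathcal{L}\otimes I_{n})\textbf{x}$ for the common drift, so that the algorithm reads $\dot{\textbf{x}}(t)=(c+\frac{1}{T-t})h(\textbf{x}(t))$ on $[0,T)$ and $\dot{\textbf{x}}(t)=c\,h(\textbf{x}(t))$ on $[T,\infty)$. First I would settle the claims on $[T,\infty)$. Theorem \ref{Theorem1} gives $\textbf{x}(T)=\tilde{\textbf{x}}=1_{N}\otimes x^{*}$, while Proposition \ref{proposition1} gives $\textbf{F}(\tilde{\textbf{x}})=0_{n}$ and $(\mathcal{L}\otimes I_{n})\tilde{\textbf{x}}=0_{Nn}$, so $h(\tilde{\textbf{x}})=0_{Nn}$. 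Hence $\tilde{\textbf{x}}$ is an equilibrium of the autonomous system $\dot{\textbf{x}}=c\,h(\textbf{x})$; since $\textbf{F}$ is $\iota_{F}$-Lipschitz (Assumption \ref{AssumptionSmooth}) and the remaining term is linear, $h$ is globally Lipschitz and the initial-value problem has a unique solution, forcing $\textbf{x}(t)\equiv\tilde{\textbf{x}}$ and the input $c\,h(\tilde{\textbf{x}})=0_{Nn}$ for all $t\ge T$. This proves both that the NE is maintained and that the control vanishes on $[T,\infty)$.

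Next I would handle regularity. On the open sets $[0,T)$ and $(T,\infty)$ the input is a composition of $\mathcal{C}^{1}$ maps — $\textbf{F}$ is $\mathcal{C}^{1}$ since each $J_{i}$ is $\mathcal{C}^{2}$ (Assumption \ref{AssumptionConvex}) and the gain $c+\frac{1}{T-t}$ is smooth away from $T$ — so only the matching across $t=T$ is at issue. The natural device is the time transformation (\ref{TimeTransformFunction}): with $s=\ln\frac{T}{T-t}$ the chain rule gives $\dot{\textbf{x}}(t)=\psi^{'}(s)/\lambda^{'}(s)=\frac{e^{s}}{T}\psi^{'}(s)$, and (\ref{ClosedLoopSysteminSpace}) gives $\psi^{'}(s)=\theta(s)h(\psi(s))$. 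Continuity of the input at $T$ therefore amounts to $\lim_{s\to\infty}\frac{e^{s}}{T}\psi^{'}(s)=0$, and $\mathcal{C}^{1}$-matching to the existence of a (zero) one-sided limit for its $t$-derivative as well; the latter, by differentiating $(c+\frac{1}{T-t})h(\textbf{x})$, carries the more singular factor $\frac{1}{(T-t)^{2}}$, so it demands a correspondingly faster decay of $h$.

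For uniform boundedness I would quantify this decay from the Lyapunov estimate already produced in Theorem \ref{Theorem1}. Inequality (\ref{L8}) yields $V^{'}(\psi(s))\le-\theta(s)\,\lambda_{\min}(O)\,V(\psi(s))$ with $\theta(s)\ge1$, so $V(\psi(s))\le V(\psi(0))e^{-\lambda_{\min}(O)s}$ and hence $\|\psi(s)-\tilde{\psi}\|\le c_{0}e^{-\alpha s}$ with $\alpha=\frac{1}{2}\lambda_{\min}(O)$ and some $c_{0}>0$. Since $h(\tilde{\psi})=0_{Nn}$ and $h$ is Lipschitz with some constant $L$, $\|h(\psi(s))\|\le L\,\|\psi(s)-\tilde{\psi}\|\le Lc_{0}e^{-\alpha s}$, whence the input obeys $\|\dot{\textbf{x}}(t)\|=(c+\frac{e^{s}}{T})\|h(\psi(s))\|\le cLc_{0}e^{-\alpha s}+\frac{Lc_{0}}{T}e^{(1-\alpha)s}$. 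Combined with the identically-zero input on $[T,\infty)$, this bounds the control on all of $[0,\infty)$ as soon as the rate obeys $\alpha\ge1$, i.e. the divergent factor $e^{s}/T$ is absorbed.

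The hard part will be exactly this rate bookkeeping. The prescribed-time gain injects the singular factor $\frac{1}{T-t}=\frac{e^{s}}{T}$, so boundedness needs $h(\psi(s))=O(e^{-s})$ while $\mathcal{C}^{1}$-matching at $T$ needs $h(\psi(s))$ to decay faster than $e^{-2s}$ (to kill the $\frac{1}{(T-t)^{2}}$ term) — that is, decay of the drift itself rather than merely of $\psi(s)-\tilde{\psi}$. I expect the crude bound $\|h(\psi(s))\|\le L\|\psi(s)-\tilde{\psi}\|$ under (\ref{GainCondition}) to be too weak to deliver these rates, since $\lambda_{\min}(O)$ saturates as $\kappa$ grows. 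I would therefore run a second comparison/Lyapunov argument directly on $\|h(\psi(s))\|$ (or on $\psi^{'}(s)$), obtained by differentiating (\ref{ClosedLoopSysteminSpace}) and invoking the $\varepsilon$-strong monotonicity of $\textbf{F}$, and if necessary tighten the gain selection beyond (\ref{GainCondition}) so that the contraction rate of $h(\psi(s))$ dominates the $e^{s}$ growth; this is what ultimately secures both uniform boundedness and $\mathcal{C}^{1}$ smoothness across $t=T$.
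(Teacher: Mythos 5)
There is a genuine gap, and it originates in a misidentification of the object the theorem bounds. In the paper's notation the ``control signal'' of this theorem is $\textbf{u}^{i}(t)=-\mathcal{R}^{T}_{i}\triangledown_{i}J_{i}(\textbf{x}^{i}(t))+\textbf{e}^{i}(t)$, i.e.\ the pre-gain drift $h(\textbf{x}(t))$ in your notation; the prescribed-time factor $(c+\frac{1}{T-t})$ multiplies $\textbf{u}^{i}$ in (\ref{PrescribedTimeAlgorithm}) but is \emph{not} part of the signal whose boundedness and smoothness are claimed. In transformed time the paper works with $\vartheta(s)=\psi^{'}(s)=-\theta(s)[\mathcal{R}^{T}\textbf{F}(\psi(s))+\kappa(\mathcal{L}\otimes I_{n})\psi(s)]$, where $\theta(s)=1+cTe^{-s}\in[1,1+cT]$ is \emph{bounded}, so the exponential estimate $\|\psi(s)-\tilde{\psi}\|\leq e^{-\lambda_{\min}(O)s}\|\psi(0)-\tilde{\psi}\|$ from Theorem \ref{Theorem1} plus Lipschitzness immediately yields the bound (\ref{L13}): $\|\vartheta(s)\|$, hence $\|\textbf{u}(t)\|$, decays like $e^{-\lambda_{\min}(O)s}$ with \emph{no} requirement that the rate exceed $1$. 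Uniform boundedness on $[0,T)$ and vanishing of the input as $t\rightarrow T^{-}$ follow under the original gain condition (\ref{GainCondition}); none of your ``rate bookkeeping'' is needed. Your divergent factor $e^{s}/T$ appears only because you identify the control with $\dot{\textbf{x}}$, and under that reading your own diagnosis is fatal to your plan: as you note, $\lambda_{\min}(O)$ saturates (near $2\varepsilon/N$) as $\kappa$ grows, so ``tightening the gain selection beyond (\ref{GainCondition})'' cannot force $\alpha\geq 1$, would in any case alter the theorem's hypotheses, and your proposal accordingly ends with the central claims (boundedness and $\mathcal{C}^{1}$ matching) deferred to an unexecuted second Lyapunov argument. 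As written, the attempt does not prove the statement.

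Your treatment of $[T,\infty)$ is correct and essentially equivalent to the paper's: the paper runs the same Lyapunov function $V(\textbf{x})=\frac{1}{2}\|\textbf{x}-\tilde{\textbf{x}}\|^{2}$ on $\dot{\textbf{x}}=c\,h(\textbf{x})$, obtains $\dot{V}\leq -c\lambda_{\min}(O)V\leq 0$, and combines this with continuity of $V$ at $t=T$ and $V(T)=0$ (from Theorem \ref{Theorem1}) to conclude $V\equiv 0$, hence $\textbf{x}\equiv\tilde{\textbf{x}}$ and zero input on $[T,\infty)$; your Picard--Lindel\"{o}f uniqueness argument from $h(\tilde{\textbf{x}})=0_{Nn}$ (via Proposition \ref{proposition1}) buys the same conclusion with less machinery, modulo the global Lipschitzness of the extended map $\textbf{F}$, which the paper also merely asserts with a constant $\iota_{\textbf{F}}$. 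One mitigating remark: your worry about derivative matching at $t=T$ does touch a step the paper itself only asserts --- the paper proves $\mathcal{C}^{1}$ smoothness of $\textbf{u}$ on $[0,T)$ (via continuity of $\vartheta$ and $d\vartheta/ds$ in $s$) and zero input on $[T,\infty)$, then states the junction claim on $[0,\infty)$ without verifying $d\textbf{u}/dt\rightarrow 0$ as $t\rightarrow T^{-}$, which, since $d\textbf{u}/dt$ carries one factor of $(c+\frac{1}{T-t})$, would indeed require a rate argument of the kind you sketch. But that is a criticism of the paper's last sentence, not a substitute for the proof: the theorem's boundedness claim concerns $\textbf{u}^{i}$, and for that signal the paper's argument closes under (\ref{GainCondition}) while yours does not.
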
 

\begin{proof}
It follows from (\ref{ClosedLoopSysteminSpace}) that for a stacked vector $ \vartheta(s)$,  
\vspace{-3pt}
\begin{align}
\frac{d \psi(s)}{ds}&=\vartheta(s)=-\theta(s) [\mathcal{R}^{T} \textbf{ F}(\psi(s)) + \kappa (\mathcal{L}\otimes I_{n}) \psi(s)],  \label{L9} \\
\frac{d\vartheta(s)}{ds}&=-cTe^{-s} [-\mathcal{R}^{T} \textbf{F}(\psi(s)) - \kappa (\mathcal{L}\otimes I_{n}) \psi(s)] \notag \\
& \ \ \ -\theta(s)  [ \mathcal{R}^{T} \frac{ \partial{\textbf{F}(\psi(s))}} {\partial{\psi(s)}}+ \kappa (\mathcal{L}\otimes I_{n}) ]\frac{d \psi(s)}{ds} \notag \\
&= \ \ \ -(1-\frac{1}{\theta(s)})\vartheta(s) - \theta(s) H(\psi(s)) \vartheta(s), \label{L10} 
\end{align} 
where $ H(\psi(s)) = \mathcal{R}^{T} \frac{ \partial{\textbf{F}(\psi(s))}} {\partial{\psi(s)}}+ \kappa (\mathcal{L}\otimes I_{n})   $ denotes a matrix with $  \partial{\textbf{F}(\psi(s))} / \partial{\psi(s)} $ being a  Hessian matrix, and $  \frac{1}{\theta(s)} \in (0,1) $.

From Theorem \ref{Theorem1}, it has been proved by (\ref{L8}) that 
\vspace{-3pt}
\begin{equation}
V^{'}(s) \leq - \lambda_{min}(O)\theta(s) V(s),  
\label{L11}
\end{equation}  
which implies that the error satisfies 
\vspace{-3pt}
\begin{equation}
\|\psi(s)-\tilde{\psi}\| \leq e^{-\lambda_{min}(O) (s-s_{0})} \|\psi(s_{0})-\tilde{\psi}\|, \ \forall s\geq s_{0}=0. \label{L12}
\end{equation}

Hence, the boundness of $ \vartheta(s) $ can be derived as 
\begin{align}
\hspace{-0.4em}
\|\vartheta(s)\| &=\theta(s) \|\mathcal{R}^{T}(\textbf{F}(\psi(s))-F(\tilde{\psi})) + \kappa (\mathcal{L}\otimes I_{n}) (\psi(s)-\tilde{\psi}))\|,  \notag \\
& \leq \theta(s) [\iota_{\textbf{F}}\| \|\psi(s)-\tilde{\psi}\| +k\|\mathcal{L}\|\psi(s)-\tilde{\psi}\|] \notag \\
& \leq \theta(s) (\iota_{\textbf{F}}+k\|\mathcal{L}\|)  e^{-\lambda_{min}(O)} \|\psi(s(0))-\tilde{\psi}\|, 
\label{L13} 
\end{align}
which implies that $ \vartheta(s) $ is bounded on $[0,\infty)$, i.e., $ \vartheta(s) \in \mathcal{L}_{\infty} $ on $[0,\infty)$. Further, it follows from (\ref{L10}) that $ \vartheta(s) $ is a bounded solution to the system in (\ref{L10}) on $[0,\infty)$. Thus, it is concluded that $ \textbf{u}^{i}(t) $ in (\ref{AlgorithmStatic}) is bounded  on $[0,T)$.  Moreover, according to (\ref{L9}) and (\ref{L10}), we can see that both $ \vartheta(s) $ and $ d\vartheta(s)/ds $ are continuous with respect to $ \psi(s) $ on $ [0,\infty) $. Since $ \psi(s) $ is continuous with respect to $ s $  based on the continuity of (\ref{L9}), we obtain that $ \vartheta(s) $ and $ d\vartheta(s)/ds $ are continuous with respect to $ s\in [0,\infty) $. Thus, $ \psi(s) $ is $ \mathcal{C}^{1} $ smooth with respect to $ s\in [0,\infty) $. That is, $ \textbf{u}^{i}(t) $ in (\ref{AlgorithmStatic}) is $ \mathcal{C}^{1} $ smooth with respect to $t\in [0,T) $. 

\vspace{2pt}
Next, we show that the NE can be maintained and the control input signal $ \textbf{u}^{i}(t) $ remains zero over $ [0,\infty) $. Notice that for $ t\in [T,\infty) $, the proposed algorithm in (\ref{AlgorithmStatic}) yields 
\vspace{-2pt}
\begin{equation} 
\dot{\textbf{x}}(t) = c \left[ 
-\mathcal{R}^{T} \textbf{ F}(\textbf{x}(t)) - \kappa (\mathcal{L}\otimes I_{n}) \textbf{x}(t) \right],   
\ t\in [T,\infty). 
\label{ClosedLoopSystem11}   
\end{equation}

Choose a similar Lyapunov function as  (\ref{LyapunovFunction1}) for $ t\in [T,\infty) $.   
\vspace{-4pt}
\begin{align}
V(\textbf{x})&= \frac{1}{2} (\textbf{x}-\tilde{\textbf{x}})^{T}(\textbf{x}-\tilde{\textbf{x}})= \frac{1}{2} (\overrightarrow{\textbf{x}}+\overleftarrow{\textbf{x}} -\tilde{\textbf{x}})^{T}(\overrightarrow{\textbf{x}}+\overleftarrow{\textbf{x}} -\tilde{\textbf{x}}) \notag \\
&=\frac{1}{2} 
\left[
\begin{array}{l}  
\overrightarrow{\textbf{x}}-\tilde{\textbf{x}}    \\ 
\ \ \   \overleftarrow{\textbf{x}}  
\end{array}
\right]^{T} \left[
\begin{array}{l}  
\overrightarrow{\textbf{x}}-\tilde{\textbf{x}}   \\ 
\ \  \  \overleftarrow{\textbf{x}}  
\end{array}
\right], \ t\in [T,\infty) 
\label{LyapunovFunction111}
\end{align}%
where $ \overrightarrow{\textbf{x}} = (1_{N} \otimes  \mathcal{S} )\textbf{x} \in \mathbb{R}^{Nn} $ and $ \overleftarrow{\textbf{x}} = (\mathcal{T} \otimes I_{n}) \textbf{x} \in \mathbb{R}^{Nn} $ with $ \mathcal{S}, \mathcal{T} $ being defined in (\ref{Xright})-(\ref{Xleft}).  
 
Then, the time derivative of $ V(\textbf{x}) $ along (\ref{LyapunovFunction111}) yields  
\vspace*{-4pt}
\begin{align}
\hspace{-0.3em}
\dot{V}(\textbf{x}) & \leq   - c  \left[
\begin{array}{l}  
\|\bar{\textbf{x}}-x^{*}\|  \\ 
\ \ \|\overleftarrow{\textbf{x}} \|
\end{array}
\right]^{T}      
\left[
\begin{array}{cc} 
\varepsilon   & - \iota \\
-\iota  & \kappa \lambda_{2} (\mathcal{L}) - \iota 
\end{array}
\right]
\left[
\begin{array}{l}  
\| \bar{\textbf{x}}-x^{*} \|  \\ 
\ \  \|\overleftarrow{\textbf{x}} \|
\end{array}
\right], \notag \\
& = - \frac{c}{2} \left[
\begin{array}{l}  
\|\overrightarrow{\textbf{x}}-\tilde{\textbf{x}} \| \\ 
\ \  \|\overleftarrow{\textbf{x}} \|
\end{array}
\right]^{T}     
O
\left[
\begin{array}{l}  
\|\overrightarrow{\textbf{x}}-\tilde{\textbf{x}} \| \\ 
\ \  \|\overleftarrow{\textbf{x}} \|
\end{array}
\right], \ t\in [T,\infty) 
\label{LLV}
\end{align} 
which implies that
\vspace*{-3pt} 
\begin{equation}
\dot{V}(\textbf{x}) \leq - c \lambda_{min}(O)  V(\textbf{x}) \leq 0,  \ t\in [T,\infty).   
\label{LLV1}
\end{equation}

Since $ \textbf{x} $ is continuous at $ t=T $ from the continuity of system, we obtain that $  V(\textbf{x}) $ is continuous at $ t=T $, and then 
\vspace*{-3pt}
\begin{equation}
V(T)= \lim_{t\rightarrow T^{-}} (\textbf{x}-\tilde{\textbf{x}})^{T}(\textbf{x}-\tilde{\textbf{x}}) =0.    
\label{LLV2}
\end{equation}

Combining (\ref{LLV1}) and (\ref{LLV2}) gives rise to 
\vspace*{-3pt}
\begin{equation}
0\leq V(t) \leq V(T) =0,   \ t\in [T,\infty), 
\label{LLV3}
\end{equation}
which implies that $ V(t) \equiv 0 $ on $ [T,\infty) $. Thus, $ \textbf{x}-\tilde{\textbf{x}} \equiv 0$ on $ [T,\infty) $. Based on the fact that  $\dot{\textbf{x}}^{i}(t)=c\textbf{u}^{i}(t) $ for $ t \geq T$, it can be verified that $\dot{\textbf{x}}^{i}(t)= c \textbf{u}^{i}(t) \equiv 0$ on $ [T,\infty) $. Hence, the NE is maintained and the control input signal remains zero over $[T,\infty)$. Further, it can be shown that the control input signal is $ \mathcal{C}^{1} $ smooth and uniformly bounded on $[0,\infty)$.

Overall, it is concluded that the NE is found in a prescribed-time $ T $ and is maintained over $[T,\infty)$, and moreover, the control input signal is $ \mathcal{C}^{1} $ smooth and uniformly bounded on $[0,\infty)$.	
\end{proof}

\subsection{Prescribed-Time And Fully Distributed NE Seeking Design} 
In this subsection, we develop a novel NE seeking algorithm that is fully distributed to remove the strong requirement that $ \kappa $ in (\ref{GainCondition}) requires the known global graph information.

\vspace{3pt}
\textit{\textbf{Prescribed-time and fully distributed NE seeking algorithm}}: instead of using the static gain $ \kappa $ in (\ref{PrescribedTimeAlgorithm}), we propose a dynamic gain based distributed integral loop that tunes on-line the weights on the edges of the topology. In light of (\ref{NECompactDynamics}), the prescribed-time and fully distributed NE seeking algorithm is described by    
\vspace{-4pt}
\begin{align}
\dot{\textbf{x}}^{i}(t)&= (c+\frac{1}{T-t}) \textbf{u}^{i}(t), \ \textbf{u}^{i}(t)= -\mathcal{R}^{T}_{i} \triangledown_{i} J_{i}(\textbf{x}^{i}(t))  + \textbf{e}^{i}(t),   \notag \\
\textbf{e}^{i}(t)&=- \sum^{N}_{j=1} \kappa_{ij}(t) a_{ij}  (\textbf{x}^{i}(t)-\textbf{x}^{j}(t)),  
 i  \in \mathcal{V},   t\in [0,T), 
\label{PrescribedTimeAlgorithmAdaptive} \\
\dot{\kappa}_{ij}(t)& =(c+\frac{1}{T-t})  \left[ \gamma_{ij} a_{ij} (\textbf{x}^{i}(t)-\textbf{x}^{j}(t))^{T} (\textbf{x}^{i}(t)-\textbf{x}^{j}(t) )\right] , \notag 
\end{align}  
where $\kappa_{ij}(t)$ is a dynamic control gain with $ \kappa_{ij}(0)=\kappa_{ji}(0) \geq 0 $, $  i, j   \in \mathcal{V} $, $\gamma_{ij}=\gamma_{ji} >0$ is a scalar, and for $ t \geq T$, $\dot{\textbf{x}}^{i}(t)=c\textbf{u}^{i}(t)$, $\dot{\kappa}_{ij}(t)=c\gamma_{ij} a_{ij} (\textbf{x}^{i}(t)-\textbf{x}^{j}(t))^{T} (\textbf{x}^{i}(t)-\textbf{x}^{j}(t) ) $.  

\vspace{2pt} 
\begin{remark}
As can be seen that in (\ref{PrescribedTimeAlgorithmAdaptive}), only relative estimated information is used, and this proposed NE seeking algorithm will be proved to not require any global information on the algebraic connectivity of graphs, the Lipschitz and monotone constants of pseudo-gradients and the number of players. Unlike [8]-[16], the developed NE seeking algorithm is thus fully distributed.  
\end{remark}

\vspace{3pt}
Then, based on $ t=\lambda(s) \triangleq T(1-e^{-s}) $ in (\ref{TimeTransformFunction}), we can transform (\ref{PrescribedTimeAlgorithmAdaptive}) into the  closed-loop system with respect to $ s \in [0,\infty) $,  
\vspace{-4pt} 
\begin{align}
\frac{d\psi^{i}(s)}{ds}&=   - \theta(s)[\mathcal{R}^{T}_{i} \triangledown_{i} J_{i}(\psi^{i}(s)) + \sum^{N}_{j=1} \phi_{ij}(s) a_{ij}  (\psi^{i}(s)-\psi^{j}(s))],  \notag \\
\frac{d\phi_{ij}(s)}{ds}& =\theta(s) \gamma_{ij} a_{ij}  (\psi^{i}(s)-\psi^{j}(s))^{T} (\psi^{i}(s)-\psi^{j}(s)), 
\label{PrescribedTimeAlgorithmAdaptiveClosed}
\end{align}  
where  $ \theta(s)>0 $, and $ \psi^{i}(s) $, $ \phi^{i}(s) $ are the transformation functions with respect to $ \textbf{x}^{i}(t) $, $ \kappa_{ij}(t) $ in (\ref{PrescribedTimeAlgorithmAdaptive}), respectively. 

\vspace{2pt}
Next, we present the result on the fully distributed NE seeking of the noncooperative game in a prescribed time.   

\vspace{2pt} 
\begin{theorem} \label{Theorem2}
Suppose that Assumptions \ref{AssumptionGraph}, \ref{AssumptionConvex} and \ref{AssumptionSmooth} hold. Then, given a graph $ \mathcal{G} $ and any initial $ \textbf{x}_{i}(0)$, the proposed prescribed-time and fully distributed algorithm in (\ref{PrescribedTimeAlgorithmAdaptive}) ensures that 
	
\begin{enumerate}
\item the estimates of all players converges to the NE in the user-defined time $ T $, i.e., $ \lim_{t\rightarrow T} \textbf{x}(t) =  \tilde{\textbf{x}}=1_{N}\otimes x^{*}$; 
\item the dynamic gain $ \kappa_{ij}(t) $ for $\forall \ i,j \in \mathcal{V} $ is monotonically increasing and converges to certain finite constants. 
\end{enumerate}	 
\end{theorem}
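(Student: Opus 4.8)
The plan is to adapt the Lyapunov analysis of Theorem~\ref{Theorem1} to the adaptive setting by augmenting the Lyapunov function with terms that absorb the dynamic gains. Working in the transformed time $s\in[0,\infty)$ with the closed-loop system~(\ref{PrescribedTimeAlgorithmAdaptiveClosed}), I would first exploit the symmetry $\gamma_{ij}=\gamma_{ji}$, $a_{ij}=a_{ji}$, and $\phi_{ij}(0)=\phi_{ji}(0)$ to conclude that $\phi_{ij}(s)=\phi_{ji}(s)$ for all $s$, since the two gains satisfy identical differential equations with identical initial data. This symmetry is what allows the coupling term $\sum_{i,j}\phi_{ij}a_{ij}(\psi^i-\psi^j)$ to be written as a symmetric weighted-Laplacian quadratic form, which is essential for the energy argument.

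Next I would propose the candidate Lyapunov function
\begin{equation}
W(s)=\frac{1}{2}(\psi(s)-\tilde{\psi})^{T}(\psi(s)-\tilde{\psi})+\frac{1}{4}\sum_{i=1}^{N}\sum_{j=1}^{N}\frac{1}{\gamma_{ij}}(\phi_{ij}(s)-\beta)^{2}, \notag
\end{equation}
where $\beta>0$ is a constant chosen large enough to dominate the global quantities $\iota$, $\varepsilon$, and $\lambda_{2}(\mathcal{L})$ exactly as the static gain $\kappa$ does in~(\ref{GainCondition}). The point of subtracting the common offset $\beta$ is that when I differentiate $W$ along~(\ref{PrescribedTimeAlgorithmAdaptiveClosed}), the $\dot\phi_{ij}$ terms produced by the gradient-coupling part cancel against the cross terms coming from the gain-adaptation dynamics, leaving a residual that is controlled by a fixed $\beta$-weighted Laplacian. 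Concretely, the derivative of the first block reproduces the computation~(\ref{L3})--(\ref{L8}) but with the \emph{time-varying} weighted Laplacian in place of $\kappa\mathcal{L}$, while the derivative of the second block cancels the $\phi_{ij}$-dependent coupling, so that $W'(s)\le -\theta(s)\,z^{T}Oz$ with $z=\mathrm{col}(\|\overrightarrow{\psi}-\tilde\psi\|,\|\overleftarrow{\psi}\|)$ and the same matrix $O$ as in Theorem~\ref{Theorem1}, now positive definite because $\beta$ plays the role of $\kappa$.

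From boundedness of $W$ and $W'(s)\le 0$ I would deduce that $\psi(s)-\tilde\psi$, $\overleftarrow{\psi}(s)$, and every $\phi_{ij}(s)$ are bounded on $[0,\infty)$; invoking LaSalle's invariance principle as in the proof of Theorem~\ref{Theorem1} gives $\psi(s)\to\tilde\psi$, and transforming back through~(\ref{TimeTransformFunction}) yields claim~(1), namely $\lim_{t\to T}\textbf{x}(t)=1_{N}\otimes x^{*}$. For claim~(2), the adaptation law shows $d\phi_{ij}/ds=\theta(s)\gamma_{ij}a_{ij}\|\psi^i-\psi^j\|^{2}\ge 0$, so each $\phi_{ij}$ (hence each $\kappa_{ij}$) is nondecreasing; boundedness from the $W$-estimate, being monotone and bounded, forces convergence to a finite limit. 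The main obstacle I anticipate is verifying the exact cancellation of the $\phi_{ij}$ cross terms while handling the time-varying Laplacian carefully, and in particular confirming that the offset $\beta$ can be chosen large enough to render $O>0$ \emph{without} any agent knowing $\beta$ --- the resolution being that $\beta$ appears only in the analysis, not in the algorithm, so no global information is actually used in~(\ref{PrescribedTimeAlgorithmAdaptive}).
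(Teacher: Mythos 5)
Your proposal matches the paper's own proof essentially step for step: the paper uses the same augmented Lyapunov function $W(s)=V(\psi(s))+\sum_{i}\sum_{j\in\mathcal{N}_i}\frac{1}{4\gamma_{ij}}(\phi_{ij}(s)-\gamma^{*})^{2}$ (your $\beta$ is the paper's analysis-only constant $\gamma^{*}$, chosen to satisfy $\gamma^{*}>\frac{1}{\lambda_{2}(\mathcal{L})}(\frac{\iota^{2}}{\varepsilon}+\iota)$), exploits the symmetry $\phi_{ij}(s)=\phi_{ji}(s)$ to cancel the adaptive-gain cross terms and reduce the coupling to a fixed $\gamma^{*}$-weighted Laplacian quadratic form, reuses the bounds (\ref{L3})--(\ref{L5}) to obtain $W'\le-\theta\,z^{T}Pz$ with $P>0$, and concludes via boundedness, LaSalle, and the time transformation (\ref{TimeTransformFunction}), with claim (2) following from $d\phi_{ij}/ds\ge 0$ plus boundedness exactly as you argue. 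Your closing observation that $\beta$ appears only in the analysis and not in the algorithm is precisely the mechanism by which the paper achieves the fully distributed property, so the proposal is correct and takes the same route.
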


\begin{proof}
we choose the following Lyapunov function candidate 
\vspace{-3pt}
\begin{equation}
W(s)=V(\psi(s)) + U(s), \label{Lyapunovfunction2}
\end{equation}
where both differentiable function  $ V(\psi(s)) $, $ U(s) $ are given by 
\vspace{-3pt}
\begin{equation}
V(\psi(s)) = \frac{1}{2} \sum_{i=1}^{N} (\psi^{i}(s)-x^{*})^{T}(\psi^{i}(s)-x^{*}),
\end{equation}
\vspace{-3pt}
\begin{equation}
U(s) =  \sum_{i=1}^{N} \sum_{j \in \mathcal{N}_{i}} \frac{1}{4\gamma_{ij}}  (\phi_{ij}(s)-\gamma^{*})^{T}(\phi_{ij}(s)-\gamma^{*}), 
\end{equation}
where $ \gamma^{*}>0 $ is a constant parameter to be determined later. 

Next, differentiating $ W(s) $ with respect to $ s $ and exploiting (\ref{PrescribedTimeAlgorithmAdaptiveClosed}) give rise to the following expression  
\vspace*{-2pt}
\begin{align}
\hspace{-0.5em}
W^{'}(s) &= -\theta(s)  \sum_{i=1}^{N} (\psi^{i}(s)-x^{*})^{T}   \mathcal{R}^{T}_{i} [\triangledown_{i} J_{i}(\psi^{i}(s)) -\triangledown_{i} J_{i}(x^{*}) ] \notag \\
& -\theta(s) \sum_{i=1}^{N}\sum^{N}_{j=1} \phi_{ij}(s) a_{ij}  (\psi^{i}(s)-x^{*})^{T}   (\psi^{i}(s)-\psi^{j}(s)) \notag \\
&+\frac{\theta(s)}{2}\sum_{i=1}^{N} \sum_{j \in \mathcal{N}_{i}} a_{ij}  (\phi_{ij}(s)-\gamma^{*})  ||\psi^{i}(s)-\psi^{j}(s)||^{2},
\label{NL1}
\end{align} 
where the fact that $ 0_{n}=\mathcal{R}^{T}_{i} \triangledown_{i} J_{i}(x^{*}) $ at the NE has been used.

Notice that the last term in (\ref{NL1}) can be expressed as 
\vspace*{-2pt}
\begin{align}
& \ \ \ \frac{1}{2}\sum_{i=1}^{N} \sum_{j \in \mathcal{N}_{i}} a_{ij}  (\phi_{ij}(s)-\gamma^{*})  ||\psi^{i}(s)-\psi^{j}(s)||^{2} \notag \\
&= \frac{1}{2}\sum_{i=1}^{N} \sum_{j \in \mathcal{N}_{i}} a_{ij}  (\phi_{ij}(s)-\gamma^{*}) (\psi^{i}(s)-\psi^{j}(s))^{T} (\psi^{i}(s)-\psi^{j}(s)) \notag \\
&= \sum_{i=1}^{N} \sum_{j \in \mathcal{N}_{i}} a_{ij} (\phi_{ij}(s)-\gamma^{*}) (\psi^{i}(s))^{T}  (\psi^{i}(s)-\psi^{j}(s)),
\label{NL2}
\end{align}
where the fact that $ \phi_{ij}(s)=\phi_{ji}(s) $ for $ \forall s \geq 0 $ under Assumption \ref{AssumptionGraph} has been used to obtain the last term.   

\vspace*{3pt}
Since $ \sum_{i=1}^{N}\sum^{N}_{j=1}a_{ij}  \phi_{ij}(s)  (x^{*})^{T} (\psi^{i}(s)-\psi^{j}(s)) =0$, then combining (\ref{NL1})-(\ref{NL2}) and canceling the same term give rise to the following expression in the sense of compact from 
\vspace*{-3pt}
\begin{align}
W^{'}(s) =& \theta(s) \sum_{i=1}^{N} (\psi^{i}(s)-x^{*})^{T}   \mathcal{R}^{T}_{i} [\triangledown_{i} J_{i}(\psi^{i}(s)) -\triangledown_{i} J_{i}(x^{*}) ] \notag \\
& -\theta(s) \gamma^{*} \sum_{i=1}^{N} \sum_{j \in \mathcal{N}_{i}} a_{ij}  (\psi^{i}(s))^{T}  (\psi^{i}(s)-\psi^{j}(s)) \notag \\
=& -\theta(s)  (\psi(s)-\tilde{\psi})^{T} \mathcal{R}^{T}   [\textbf{F} (\psi(s)) - \textbf{F} (\tilde{\psi})] \notag \\
& -\theta(s)  \gamma^{*} \psi^{T}(s) (\mathcal{L} \otimes I_{n}) \psi(s).
\label{NL3}
\end{align}

Next, drop the symbol $ s $ in the following analysis for simplicity, and then, it follows from (\ref{L3})-(\ref{L5}) that  
\vspace*{-1pt} 
\begin{align}
- (\psi -\tilde{\psi})^{T} \mathcal{R}^{T}   [\textbf{F} (\psi ) - \textbf{F} (\tilde{\psi})] &  \leq \iota  \| \overleftarrow{\psi} \|^{2}  - \varepsilon  \|\bar{\psi} -x^{*} \|^{2} \notag \\ 
& \ \ + 2 \iota \| \overleftarrow{\psi } \| \| \bar{\psi} -x^{*} \|. \label{NL4}
\end{align}

Further, the term  $- \gamma^{*} \psi^{T} (\mathcal{L} \otimes I_{n}) \psi $ can be expressed as 
\vspace*{-1pt}
\begin{align}
&-\gamma^{*} \psi^{T} (\mathcal{L} \otimes I_{n}) \psi =- \gamma^{*} (\psi-\tilde{\psi})^{T}(\mathcal{L} \otimes I_{n}) (\psi-\tilde{\psi})  \notag \\
& = - \gamma^{*} (\overrightarrow{\psi}+\overleftarrow{\psi} )^{T}(\mathcal{L}\otimes I_{n})  (\overrightarrow{\psi}+\overleftarrow{\psi})     
\leq -\gamma^{*}\lambda_{2} (\mathcal{L}) \| \overleftarrow{\psi} \|^{2}.
\label{NL5}
\end{align} 

Thus, combining (\ref{NL3})-(\ref{NL5}) gives rise to 
\vspace*{-1pt}
\begin{align}
\hspace{-0.5em}
W^{'} & \leq \theta [2 \iota \| \overleftarrow{\psi } \| \| \bar{\psi} -x^{*} \|-  (\gamma^{*} \lambda_{2} (\mathcal{L}) - \iota ) \| \overleftarrow{\psi} \|^{2}  - \varepsilon  \|\bar{\psi} -x^{*} \|^{2} ] \notag \\
& = - \theta \left[
\begin{array}{l}  
\|\bar{\psi}-x^{*}\|  \\ 
\ \ \|\overleftarrow{\psi} \|
\end{array}
\right]^{T}      
\left[
\begin{array}{cc} 
\varepsilon   & - \iota \\
-\iota  & \gamma^{*} \lambda_{2} (\mathcal{L}) - \iota 
\end{array}
\right]
\left[
\begin{array}{l}  
\| \bar{\psi}-x^{*} \|  \\ 
\ \  \|\overleftarrow{\psi} \|
\end{array}
\right], \notag \\
& = -\theta \left[
\begin{array}{l}  
\|\overrightarrow{\psi}-\tilde{\psi} \| \\ 
\ \  \|\overleftarrow{\psi} \|
\end{array}
\right]^{T}     
P
\left[
\begin{array}{l}  
\|\overrightarrow{\psi}-\tilde{\psi} \| \\ 
\ \  \|\overleftarrow{\psi} \|
\end{array}
\right], 
\label{NL6}
\end{align} 
where $P=\left[
\begin{array}{cc} 
\frac{\varepsilon } {N} & - \frac{\iota}{\sqrt{N}} \\
-\frac{\iota}{\sqrt{N}} & \gamma^{*} \lambda_{2} (\mathcal{L}) - \iota
\end{array}
\right] >0 $  if $ \gamma^{*}  > \frac{1}{\lambda_{2} (\mathcal{L}) } (\frac{\iota^{2}}{\varepsilon } +\iota)  $.

Since $W( \psi(s)) $ in (\ref{Lyapunovfunction2}) is bounded on the interval $ s\in [0,\infty) $ and $ W^{'}( \psi(s) ) < 0 $ by (\ref{NL6}), the closed-loop system in (\ref{PrescribedTimeAlgorithmAdaptiveClosed}) is globally stable for any $ \psi(0) $ on $ s\in [0,\infty) $. Hence, the estimate states $  \overrightarrow{\psi}(s)-\tilde{\psi} $ and $\overleftarrow{\psi}(s)$ are bounded, and converge the largest invariant set $ \mathcal{M}= \{\overrightarrow{\psi}(s)=\tilde{\psi} \ \text{and} \ \overleftarrow{\psi}(s) = 0_{Nn}\} $.
Moreover, the dynamic gain $ \phi_{ij}(s) $ is bounded and converges to certain finite values. The rest is similar to the analysis in the proof of Theorem 1 and it is concluded to $ \lim_{t\rightarrow T} \tilde{\textbf{x}} = 1 \otimes x^{*} $. 
\end{proof}

Similarly, we can show that the proposed prescribed-time and fully distributed NE seeking algorithm ensures that the NE can be maintained and the control input signal remains zero over $[T, \infty)$. Moreover, this signal remains $ \mathcal{C}^{1} $ smooth and uniformly bounded on the whole time interval  $[0, \infty)$. The details are similar to the proof of Theorem \ref{The1Boundness}, and thus are omitted here. 

%

\vspace{-8pt}
\subsection{Prescribed-Time And Fully Distributed NE Seeking Design on Jointly Switching Communication Topologies}	
In this subsection, we further extend the NE seeking algorithm in (\ref{PrescribedTimeAlgorithmAdaptive}) to consider jointly switching topologies.


\vspace{3pt}
\textit{\textbf{Prescribed-time and fully distributed NE seeking design over jointly switching graphs}}: in light of (\ref{PrescribedTimeAlgorithmAdaptive}), we further propose the following novel algorithm described by 
\vspace{-3pt} 
\begin{align}
\hspace{-0.5em}
\dot{\textbf{x}}^{i}(t)&= (c+\frac{1}{T-t}) \textbf{u}^{i}(t), \textbf{u}^{i}(t)= -\mathcal{R}^{T}_{i} \triangledown_{i} J_{i}(\textbf{x}^{i}(t))  + \textbf{e}^{i}(t),  \notag \\
\hspace{-0.5em}
\textbf{e}^{i}(t)&=- \sum^{N}_{j=1} \kappa_{ij}(t) a^{\sigma(t)}_{ij}  (\textbf{x}^{i}(t)-\textbf{x}^{j}(t)),   i \in \mathcal{V},  t\in [0,T), 
\label{PrescribedTimeAlgorithmAdaptiveSwitching} \\
\hspace{-0.5em}
\dot{\kappa}_{ij}(t)& =(c+\frac{1}{T-t}) \left[ \gamma_{ij} a^{\sigma(t)}_{ij}  (\textbf{x}^{i}(t)-\textbf{x}^{j}(t))^{T} (\textbf{x}^{i}(t)-\textbf{x}^{j}(t) )\right] , \notag 
\end{align} 
where $ a^{\sigma(t)}_{ij} $ represents the adjacency element of jointly switching topologies $ \mathcal{G}^{\sigma(t)} $, and for $ t\geq T $, $ \dot{\textbf{x}}^{i}(t)=c\textbf{u}^{i}(t), \dot{\kappa}_{ij}(t)=\gamma_{ij} a^{\sigma(t)}_{ij} \\  (\textbf{x}^{i}(t)-\textbf{x}^{j}(t))^{T} (\textbf{x}^{i}(t)-\textbf{x}^{j}(t) )$.

Next,
we transform (\ref{PrescribedTimeAlgorithmAdaptiveSwitching}) into the following closed-loop error system with respect to the variable $ s \in [0,\infty) $,  
\vspace{-3pt}
\begin{align}
\hspace{-0.8em}
\frac{d\psi^{i}(s)}{ds}&=   \theta(s) [\sum^{N}_{j=1} \phi_{ij}(s) a^{\alpha(s)}_{ij}  (\psi^{j}(s)-\psi^{i}(s)) -  \mathcal{R}^{T}_{i} \triangledown_{i} J_{i}(\psi^{i}(s))] ,  \notag \\
\hspace{-0.8em}
\frac{d\phi_{ij}(s)}{ds}& = \theta(s) \gamma_{ij} a^{\alpha(s)}_{ij} (\psi^{i}(s)-\psi^{j}(s))^{T} (\psi^{i}(s)-\psi^{j}(s)), 
\label{PrescribedTimeAlgorithmAdaptiveClosedSwitching}
\end{align}  
where $  \theta(s)>0 $, and $ \psi^{i}(s) $, $ \phi_{ij}(s) $, $ a^{\alpha(s)}_{ij}$ are transformation functions with respect to $ \textbf{x}^{i}(t) $, $ \kappa_{ij}(t) $, and $ a^{\sigma(t)}_{ij} $ in (\ref{PrescribedTimeAlgorithmAdaptiveSwitching}), respectively.


Next, we present the result on the prescribed-time and fully distributed NE seeking over switching graphs.   

\vspace{2pt} 
\begin{theorem} \label{Theorem3}
Suppose that Assumptions \ref{AssumptionGraphSwitching}-\ref{AssumptionSmooth} hold. Given the graph $ \mathcal{G}^{\sigma(t)} $ and any initial $ \textbf{x}_{i}(0)$, the proposed prescribed-time fully distributed algorithm in (\ref{PrescribedTimeAlgorithmAdaptiveSwitching}) ensures that 
not only all players' estimates converge to the NE in a time $ T $, i.e., $ \lim_{t\rightarrow T} \textbf{x}(t) =  \\ \tilde{\textbf{x}}=1_{N}\otimes x^{*}$ but also the dynamic gain $ \kappa_{ij}(t) $  is monotonically increasing and converges to certain finite constants.  
\end{theorem}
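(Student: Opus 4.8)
The plan is to mirror the proof of Theorem \ref{Theorem2} as closely as possible, using the same composite Lyapunov function $W(s)=V(\psi(s))+U(s)$ of (\ref{Lyapunovfunction2}), and to isolate the single place where joint switching forces a genuinely new argument. First I would check that the symmetry $\phi_{ij}(s)=\phi_{ji}(s)$ is preserved along (\ref{PrescribedTimeAlgorithmAdaptiveClosedSwitching}); this holds because $a^{\alpha(s)}_{ij}=a^{\alpha(s)}_{ji}$ for the undirected switching graph and $\kappa_{ij}(0)=\kappa_{ji}(0)$, and it is exactly this symmetry that lets the cross terms between the $\psi$- and $\phi$-dynamics cancel. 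Repeating the cancellation of (\ref{NL1})--(\ref{NL3}) verbatim, but with the fixed $\mathcal{L}$ replaced by the time-varying $\mathcal{L}^{\alpha(s)}$, I expect to obtain $W'(s)=-\theta(s)(\psi-\tilde{\psi})^{T}\mathcal{R}^{T}[\mathbf{F}(\psi)-\mathbf{F}(\tilde{\psi})]-\theta(s)\gamma^{*}\psi^{T}(\mathcal{L}^{\alpha(s)}\otimes I_{n})\psi$.

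Second, I would bound the pseudo-gradient term exactly as in (\ref{NL4}) using $\varepsilon$-strong monotonicity and $\iota$-Lipschitzness, then apply Young's inequality to the cross term $2\iota\|\overleftarrow{\psi}\|\|\bar{\psi}-x^{*}\|$ so that the two $\|\bar{\psi}-x^{*}\|^{2}$ contributions cancel, leaving $W'(s)\le\theta(s)[(\iota+\iota^{2}/\varepsilon)\|\overleftarrow{\psi}\|^{2}-\gamma^{*}\,\overleftarrow{\psi}^{T}(\mathcal{L}^{\alpha(s)}\otimes I_{n})\overleftarrow{\psi}]$. For a fixed connected graph the last term dominates via $\lambda_{2}(\mathcal{L})$ and one recovers $W'\le 0$; here is precisely where the extension departs, because under Assumption \ref{AssumptionGraphSwitching} the instantaneous graph may be disconnected, so $\overleftarrow{\psi}^{T}(\mathcal{L}^{\alpha(s)}\otimes I_{n})\overleftarrow{\psi}$ need not be lower bounded by $\lambda_{2}\|\overleftarrow{\psi}\|^{2}$ and $W'(s)\le 0$ can fail instant-by-instant.

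To close this gap I would not insist on pointwise dissipation but instead work over the joint-connectivity windows of Assumption \ref{AssumptionGraphSwitching} in the transformed domain, where $\alpha(s)=\sigma(\lambda(s))$ inherits the piecewise-constant structure. The key facts are that each gain obeys $\phi'_{ij}(s)=\theta(s)\gamma_{ij}a^{\alpha(s)}_{ij}\|\psi^{i}-\psi^{j}\|^{2}\ge 0$, so every $\phi_{ij}$ is nondecreasing, and that $\gamma^{*}\overleftarrow{\psi}^{T}(\mathcal{L}^{\alpha(s)}\otimes I_{n})\overleftarrow{\psi}=\tfrac12\gamma^{*}\sum_{i,j}a^{\alpha(s)}_{ij}\|\psi^{i}-\psi^{j}\|^{2}$ is exactly the dissipation the gains integrate. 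The plan is: (i) establish boundedness of $W$, and hence of $\psi$ and every $\phi_{ij}$, not from a pointwise decrease but from a window-integrated estimate over each interval $[s_{l_{k}},s_{l_{k+1}})$, where the dwell-time bound and Assumption \ref{AssumptionGraphSwitching} are used to recover a net decrease of $W$ across the window from the aggregated Laplacian dissipation of the connected union; (ii) conclude that each nondecreasing bounded $\phi_{ij}$ converges, forcing the finite-energy bound $\int_{0}^{\infty}\theta(s)a^{\alpha(s)}_{ij}\|\psi^{i}-\psi^{j}\|^{2}\,ds<\infty$; (iii) combine this with uniform continuity of the relative errors (a Barbalat-type argument) and a generalized LaSalle invariance principle for switched systems to force the disagreement to vanish on every edge that is active over a window, propagating it through the connected union to obtain $\overleftarrow{\psi}(s)\to 0_{Nn}$, after which the averaged dynamics $\bar{\psi}'=-\theta\mathcal{S}\mathcal{R}^{T}\mathbf{F}$ together with $\varepsilon$-strong monotonicity drive $\bar{\psi}\to x^{*}$, recovering the invariant set $\mathcal{M}=\{\overrightarrow{\psi}=\tilde{\psi},\ \overleftarrow{\psi}=0\}$ of Proposition \ref{proposition1}.

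The main obstacle is steps (i)--(ii): since $W'(s)\le 0$ is unavailable instant-by-instant, boundedness of $W$ (hence of the gains) and the extraction of consensus must come from the integrated, window-based estimate rather than a pointwise Lyapunov decrease, and this is the only substantive difference from Theorem \ref{Theorem2} — it is also where the dwell time $\tau$ and the window length $\nu$ of Assumption \ref{AssumptionGraphSwitching} genuinely enter. Once $\psi(s)\to\tilde{\psi}$ and each $\phi_{ij}$ converges to a finite limit, the time transformation $t=\lambda(s)$ of (\ref{TimeTransformFunction}), with $t\to T$ as $s\to\infty$, transfers these conclusions back to real time, yielding $\lim_{t\to T}\mathbf{x}(t)=1_{N}\otimes x^{*}$ and the monotone convergence of $\kappa_{ij}(t)$ to finite constants, exactly as claimed.
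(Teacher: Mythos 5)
Your skeleton---the same composite Lyapunov function $W=V+U$ from (\ref{Lyapunovfunction2}), the symmetry $\phi_{ij}(s)=\phi_{ji}(s)$, the cancellation leading to $W'(s)=-\theta(s)(\psi-\tilde{\psi})^{T}\mathcal{R}^{T}[\mathbf{F}(\psi)-\mathbf{F}(\tilde{\psi})]-\theta(s)\gamma^{*}\psi^{T}(\mathcal{L}^{\alpha(s)}\otimes I_{n})\psi$, the bound (\ref{NL4}), and the final transfer back through $t=\lambda(s)$---matches the paper. But at the one step that distinguishes Theorem \ref{Theorem3} from Theorem \ref{Theorem2} you take a genuinely different route from the paper's. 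The paper does \emph{not} perform any window-integrated analysis: it retains a \emph{pointwise} quadratic bound $W'(s)\le-\theta(s)z^{T}Qz$ on each dwell interval, with $Q>0$ under $\gamma^{*}>(\iota^{2}/\varepsilon+\iota)/\lambda_{\min}$ where $\lambda_{\min}=\min\{\lambda_{2}(\mathcal{L}^{\sigma(t)}),\,\sigma(t)\in\{1,\dots,m\}\}$; it then notes $W'$ is differentiable on each interval $[s_{i},s_{i+1})$ with $\sup_{i}|W''(s)|\le\eta$, invokes Corollary 1 of \cite{Huang12TAC} (a generalized Barbalat lemma for piecewise-differentiable signals) to conclude $\lim_{s\rightarrow\infty}W'(s)=0$, and only then uses joint connectedness to identify the limit set with $\mathcal{M}$. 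Your observation that pointwise dissipation can fail at instants where the graph is disconnected is legitimate---indeed, for a disconnected $\mathcal{G}_{\sigma}$ one has $\lambda_{2}(\mathcal{L}^{\sigma})=0$ and the paper's gain condition degenerates, a point the paper glosses over---but it means your proposal is an alternative proof attempt, not a reconstruction of the paper's argument, and it forgoes the paper's key lemma entirely.

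As a standalone proof, your plan has a genuine gap exactly where you flag it: steps (i)--(ii) are announced but never carried out, and they are not a routine patch. During disconnected subintervals your own estimate gives $W'(s)\le\theta(s)(\iota+\iota^{2}/\varepsilon)\|\overleftarrow{\psi}\|^{2}\le C\,\theta(s)W(s)$, so $W$ can grow at an exponential rate within a window; to recover a \emph{net} decrease you must lower-bound the aggregated dissipation $\int\theta(s)\gamma^{*}\overleftarrow{\psi}^{T}(\mathcal{L}^{\alpha(s)}\otimes I_{n})\overleftarrow{\psi}\,ds$ over $[s_{l_{k}},s_{l_{k+1}})$ by a fixed fraction of the disagreement, yet the state evolves during the window, so connectivity of the \emph{union} graph does not directly yield such a bound: each edge dissipates $\|\psi^{i}-\psi^{j}\|^{2}$ only while active, evaluated at the then-current (possibly already rotated) state. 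This interplay between intra-window growth and delayed, edge-wise dissipation is precisely the difficulty the Su--Huang generalized-Barbalat machinery is designed to bypass, and without either (a) a quantitative window estimate making explicit use of the dwell time $\tau$ and window length $\nu$ (likely via a contradiction argument exploiting that persistent disagreement forces the adaptive gains $\phi_{ij}$ to grow without bound, contradicting boundedness of $U$), or (b) an appeal to a lemma of the type in Corollary 1 of \cite{Huang12TAC}, your argument does not close. The downstream steps you list---monotone bounded $\phi_{ij}$ converge, the finite-energy bound plus uniform continuity kill active-edge disagreements, propagation through the union graph, the averaged dynamics with $\varepsilon$-strong monotonicity driving $\bar{\psi}\rightarrow x^{*}$, and the time transformation returning all conclusions to $[0,T)$---are sound in outline, but they all stand on the unproven boundedness in (i), so the proposal as written is an incomplete proof of the theorem.
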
  	 
	 
\begin{proof}
consider a same Lyapunov function candidate  in (\ref{Lyapunovfunction2}). 
Then, differentiating $ W(s) $ with respect to $ s $ yields  
\vspace*{-3pt}
\begin{align}
W^{'}(s) & \leq    - \theta(s) \gamma^{*}  (\psi(s)- \tilde{\psi} )^{T} (\mathcal{L}^{\alpha(s)} \otimes I_{n})  (\overrightarrow{\psi}(s)+\overleftarrow{\psi}(s))  \notag \\
&  + \theta(s) [2 \iota \| \overleftarrow{\psi }(s) \| \| \bar{\psi} -x^{*} \|+ \iota  \| \overleftarrow{\psi}(s) \|^{2} - \varepsilon  \|\bar{\psi}(s) -x^{*} \|^{2} ],   \notag
\end{align}
where $ \mathcal{L}^{\alpha(s)} $ is the time transformation Laplacian matrix of $ \mathcal{L}^{\sigma(t)} $. 

Based on Assumption \ref{AssumptionGraphSwitching}, $ \mathcal{L}^{\sigma(t)} $ is jointly connected and time-invariant on each time interval $ [t_{i}, t_{i+1}) $, $ i=0,1,2,\cdots $. That is, there exists certain new time interval $ [s_{i}, s_{i+1}) $, $  i=0,1,2,\cdots $ so that $ \mathcal{L}^{\alpha(s)}=\mathcal{L}^{\sigma(t)} $ as $ \mathcal{L}^{\sigma(t)} $ is time-invariant on each time interval  $ [s_{i}, s_{i+1}) $. Moreover, $ \mathcal{L}^{\alpha(s)}$ is jointly connected on each interval $ [s_{i}, s_{i+1}) $.  Thus, we have that for each interval  $ s \in [s_{i}, s_{i+1}) $, 
\vspace*{-1pt}
\begin{align}
\hspace{-0.5em}
W^{'}(s) & \leq  - \theta(s) \gamma^{*} 
(\overrightarrow{\psi}(s)+\overleftarrow{\psi}(s) )^{T} (\mathcal{L}^{\alpha(s)} \otimes I_{n})  (\overrightarrow{\psi}(s)+\overleftarrow{\psi}(s))  
\notag \\
& \ \ + \theta(s) [2 \iota \| \overleftarrow{\psi }(s) \| \| \bar{\psi} -x^{*} \|+ \iota  \| \overleftarrow{\psi}(s) \|^{2} - \varepsilon  \|\bar{\psi}(s) -x^{*} \|^{2} ]  \notag \\
& = -  \theta(s)\gamma^{*}   (\overleftarrow{\psi}(s) )^{T} (\mathcal{L}^{\alpha(s)} \otimes I_{n}) \overleftarrow{\psi}(s) +  \theta(s) \iota  \| \overleftarrow{\psi}(s) \|^{2}  \notag \\
& \ \ -  \theta(s)\varepsilon   \|\bar{\psi}(s) -x^{*} \|^{2} + 2 \iota \| \overleftarrow{\psi }(s) \| \| \bar{\psi} -x^{*} \|,  \notag \\
& = - \theta(s) \left[
\begin{array}{l}  
\|\overrightarrow{\psi}-\tilde{\psi} \| \\ 
\ \  \|\overleftarrow{\psi} \|
\end{array}
\right]^{T}     
Q
\left[
\begin{array}{l}  
\|\overrightarrow{\psi}-\tilde{\psi} \| \\ 
\ \  \|\overleftarrow{\psi} \|
\end{array}
\right], 
\label{NL6}
\end{align} 
where $Q=\left[
\begin{array}{cc} 
\frac{\varepsilon }{N} & - \frac{\iota}{\sqrt{N}} \\
-\frac{\iota}{\sqrt{N}} & \gamma^{*} \lambda_{min}  - \iota
\end{array}
\right] >0 $  if $ \gamma^{*}  > \frac{1}{\lambda_{min}} (\frac{\iota^{2}}{\varepsilon } +\iota)  $, and 
$ \lambda_{min}=\min \left\lbrace \lambda_{2}(\mathcal{L}^{\sigma(t)}), \ \sigma(t)\in \{1,2,\cdots,m \}  \right\rbrace $. 

\vspace{3pt}
Since $ \mathcal{L}^{\sigma(t)} $ is time-invariant on each time interval  $ [s_{i}, s_{i+1}) $, it is derived that $ W^{'}(s) $ is differentiable on each interval  $ [s_{i}, s_{i+1}) $. Then, by (\ref{PrescribedTimeAlgorithmAdaptiveClosedSwitching}), $ \psi^{i}(s) $ and $ \phi_{ij}(s) $ are bounded for $ [s_{i}, s_{i+1}) $ and further, $  \theta(s) $ is bounded for any $ s\in (0,\infty] $. Thus, there exists a scalar $ \eta>0 $ so that $ \sup_{s_{i} \leq s \leq s_{i+1}, i=0,1,\cdots } |W^{''}(s) | \leq \eta. $  

Then, based on Corollary 1 in \cite{Huang12TAC}, we have $ \lim_{s\rightarrow \infty} W^{'}(s) =0 $. Then, the closed-loop system in (\ref{PrescribedTimeAlgorithmAdaptiveClosedSwitching}) is globally stable for any $\psi^{i}(0) $ and $ \phi_{ij}(0) $ on $ s\in [0,\infty) $. Thus, the estimate states $  \overrightarrow{\psi}(s)-\tilde{\psi} $ and $\overleftarrow{\psi}(s)$ are bounded, and converge the largest invariant set $ \mathcal{M}= \{\overrightarrow{\psi}(s)=\tilde{\psi} \ \text{and} \ \overleftarrow{\psi}(s) = 0_{Nn}\} $.
Further, the gain $ \phi_{ij}(s) $ is bounded and converges to certain finite values. 
\end{proof}		 	 
 
Further, we can show that the proposed NE seeking algorithm  guarantees that the NE can be maintained and the control input signal remains zero over $[T, \infty)$. Further, this signal remains $ \mathcal{C}^{1} $ smooth and uniformly bounded on $[0, \infty)$. The details are similar to the proof of Theorem \ref{The1Boundness}, and are omitted here.

\begin{remark}
Notice that in the absence of prescribed-time and fully distributed requirements, the following corollary is obtained: 

\textbf{Corollary 1:} Under Assumptions 1-3, the following distributed algorithm: $ \dot{\textbf{x}}^{i}=-\mathcal{R}^{T}_{i} \triangledown_{i} J_{i}(\textbf{x}^{i})  + \textbf{e}^{i}, \  \textbf{e}^{i}=-\kappa\sum_{j=1}^{N}a_{ij} (\textbf{e}^{i}-\textbf{e}^{j}) $ enables all players' estimated strategies to exponentially reach a consensus and converge to the NE if $ \kappa >  (\frac{\iota^{2}}{\varepsilon} +\iota)/\lambda_{2}(\hat{\mathcal{L}}) $.

This corollary can cover existing results (e.g., \cite{Ye17TAC} and \cite{Pavel19TAC}) as special cases. It can avoid restrictive graph coupling conditions in \cite{Pavel19TAC} by adding a proportional gain $ \kappa $, and remove the use of high-gain singular perturbation that yields semi-global convergence in \cite{Ye17TAC,Ye19TAC,Pavel19TAC}. Here, it does not require any initial requirements, while certain global graph information is needed.     
\end{remark}

\begin{remark}
The addressed prescribed-time NE seeking is partially inspired by some existing finite-/fixed/appointed-time works on consensus and optimization (e.g.,  \cite{Zhao16AT,Guan12TCS,SongTcyber18,ZhaoAT18, Zan19TAC,Pilloni16VSC,Ren14ACC,Chen18AT,Hu19TCNS,Ding19ACC}).
However, those works encounter certain design limitations such as the dependence of known initial states  \cite{Zhao16AT,Pilloni16VSC,Ren14ACC,Hu19TCNS}, the control parameters \cite{Chen18AT,Ding19ACC}, and unavailable settling time by homogeneity analysis \cite{Guan12TCS}. The appointed-time design in \cite{ZhaoAT18} heavily relies on  motion-planning-based samplings. The most related work in \cite{SongTcyber18} requires two time-varying  functions. Overall, the existing works cannot be directly applied for prescribed-time distributed NE games.   
\end{remark}

\section{Numerical Simulation} 
In this section, numerical examples are presented to verify the effectiveness of the proposed NE seeking designs. 

\vspace{3pt} 
\begin{example}
	(Energy Consumption Game)
\end{example} 

In this example, we consider an energy consumption game of $ N $ players for Heating Ventilation and Air Conditioning (HVAC) system (see \cite{Ye17TAC}), where the cost function of each player $ i $ can be modeled by the following function:
\vspace{-5pt}
\begin{equation}
J_{i}(x_{i},x_{-i})=a_{i}(x_{i}-b_{i})^{2}+ \left(  c\sum_{j=1}^{N}x_{j} +d  \right)  x_{i}, \ i \in \mathcal{V}, \notag  
\end{equation}
where $ a_{i}>0, c>0 $, $ b_{i} $ and $ d $ are constants for $ i \in \mathcal{V} $. It can be verified that Assumptions \ref{AssumptionConvex} and \ref{AssumptionSmooth} are satisfied. Throughout this simulation, let $ a_{i}=1$, $ c=0.1 $, $ d=10 $ for each player. In the following simulation, we investigate the effectiveness of the proposed distributed NE seeking algorithms.

\begin{figure}[!h]
	\centering
	\includegraphics[width=6.0cm,height=2.2cm]{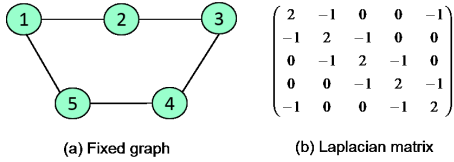}
	\caption{The communication network for a group of five players in the game: (a) fixed graph $ \mathcal{G} $; (b) Laplacian matrix $ \mathcal{L} $.} 
	\label{UndirectedGraph}
\end{figure}

\vspace{-6pt}
\subsection{Prescribed-Time Distributed NE Seeking}
We consider five players ($ N=5 $) in this energy consumption game over an undirected and connected graph as shown in Fig. \ref{UndirectedGraph}. Constants $ b_{i} $ for $ i =1,\cdots,5 $, are set to $ 10$, $15$, $20$, $25 $, and $30$, respectively. By certain calculation based on those parameters, the NE is $ x^{*} =\text{col} (2.0147,6.7766,11.5385,  16. 3004,21.0623)$  \cite{Ye17TAC}. 
The initial states are given by $ x^{i}_{i}(0)=\text{col}(-2,-4,-6, -8, -10)$ and $ x^{i}_{j}(0)=\text{col}(15,10,5,0)$, $\forall i \neq j $, which are not close to $ x^{*} $. The control gain of the prescribed-time distributed NE seeking algorithm in (\ref{PrescribedTimeAlgorithm}) is set as $ \kappa=2 $ and $ c=20 $. 
  
Next, we perform the proposed NE seeking algorithm in (\ref{PrescribedTimeAlgorithm}), and simulation results are shown in Fig. \ref{AlgorithmStatic}. In particular, Fig. \ref{AlgorithmStatic}(a) illustrates all players' estimate strategies on the NE $x^{*}$, while the relative errors of all players' actions $ \|\textbf{x}-\textbf{x}^{*} \| /   \|\textbf{x}^{*}\| $ are depicted in Fig. \ref{AlgorithmStatic}(b). As observed, all players' estimated strategies reach a consensus and converge to the NE within $ T=1.2 $sec, and this NE has been maintained for $ t \geq 1.2 $sec. 

\begin{table}[b!] 
	\centering
	\caption{The performance comparison of different algorithms. }
	\vspace*{-5pt}
	\begin{tabular}{l|l|l|l}
		\hline	\hline
		\diagbox{Algorithms}{Parameters} & $ \kappa $ & c & T (sec)  \\
		\hline
		The   algorithm in \cite{Pavel19TAC}   &  1 & 0 &  21 \\ 
		\hline
		The algorithm in Corollary 1   &  10 & 0 &  13 \\ 
		\hline
		The proposed algorithm in \ref{AlgorithmStatic}    &   2  & 20 &  1.2 \\
		\hline 	
	\end{tabular}
	\label{Table1}
\end{table}

In order to better demonstrate the prescribed-time convergence, a comparison with the following distributed NE seeking algorithm in Corollary 1 and the algorithm in \cite{Pavel19TAC} with $  \kappa=1 $ is provided. 
The simulated results are shown in Figs. \ref{Positionerror}, and the performance comparison is summarized in Table. \ref{Table1}. It can be seen that the proposed NE algorithm illustrates the better property as expected.

\begin{figure}[!t]
\centering
\hspace{0.5em}
\includegraphics[width=8.0cm,height=5.6cm]{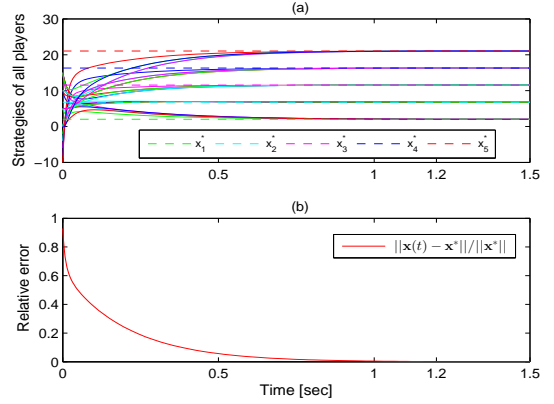}
\caption{Simulated results of the proposed prescribed-time NE seeking algorithm in (\ref{PrescribedTimeAlgorithm}): (a) all players' estimated strategies $x^{i}_{j}(t), i, j\in \mathcal{V}$; and (b) relative errors.} 
\label{AlgorithmStatic}
\end{figure}

%
%

\begin{figure}[t!] 
	\centering
	\hspace*{-0.5em}
	\begin{tabular}{cc}	
		
		\hspace*{-1.5em}		
		{\includegraphics[width=5.0cm,height=5.5cm]{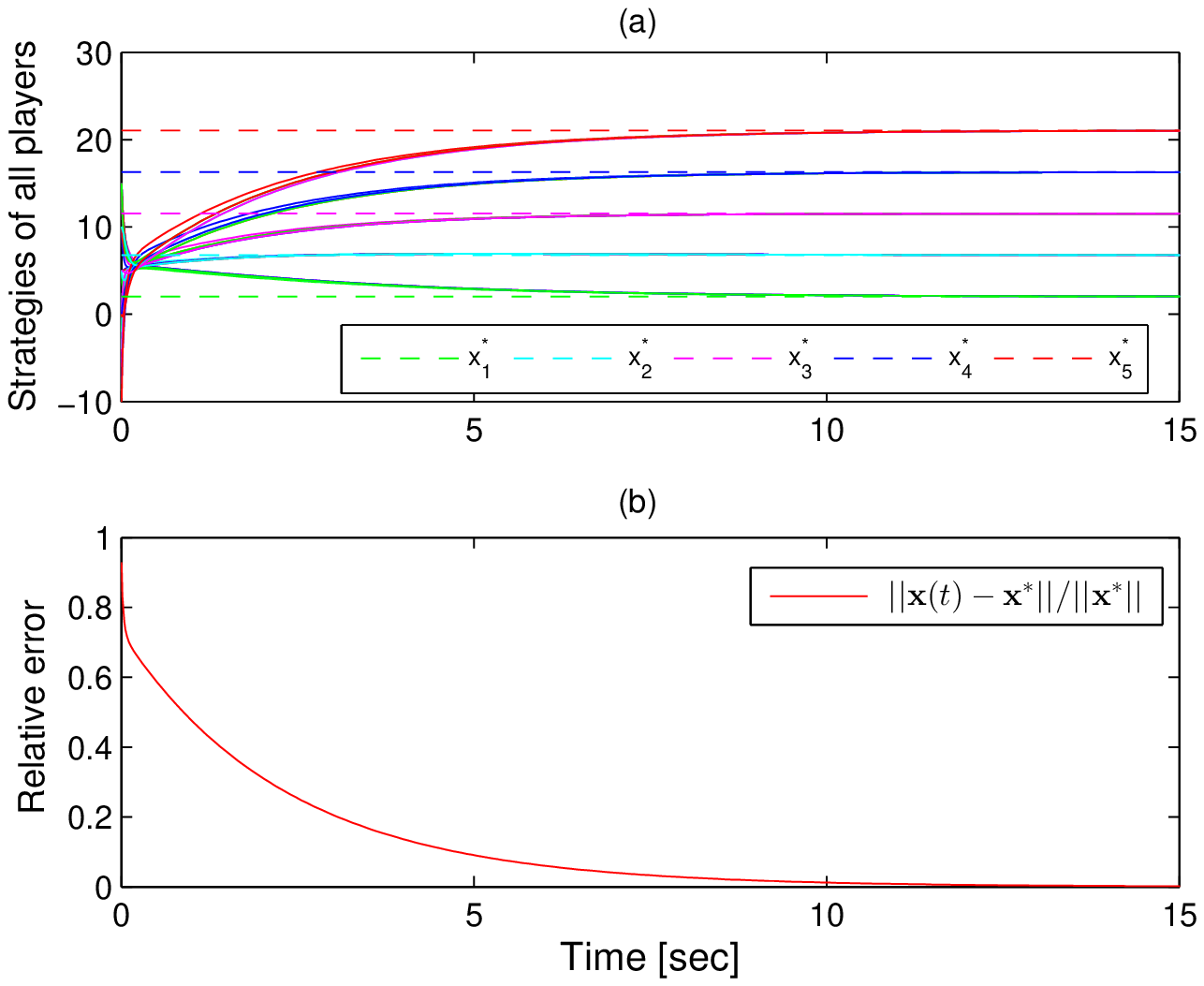}
			\label{L1}}
		
		\hspace*{-1.5em}		
		{\includegraphics[width=5.0cm,height=5.5cm]{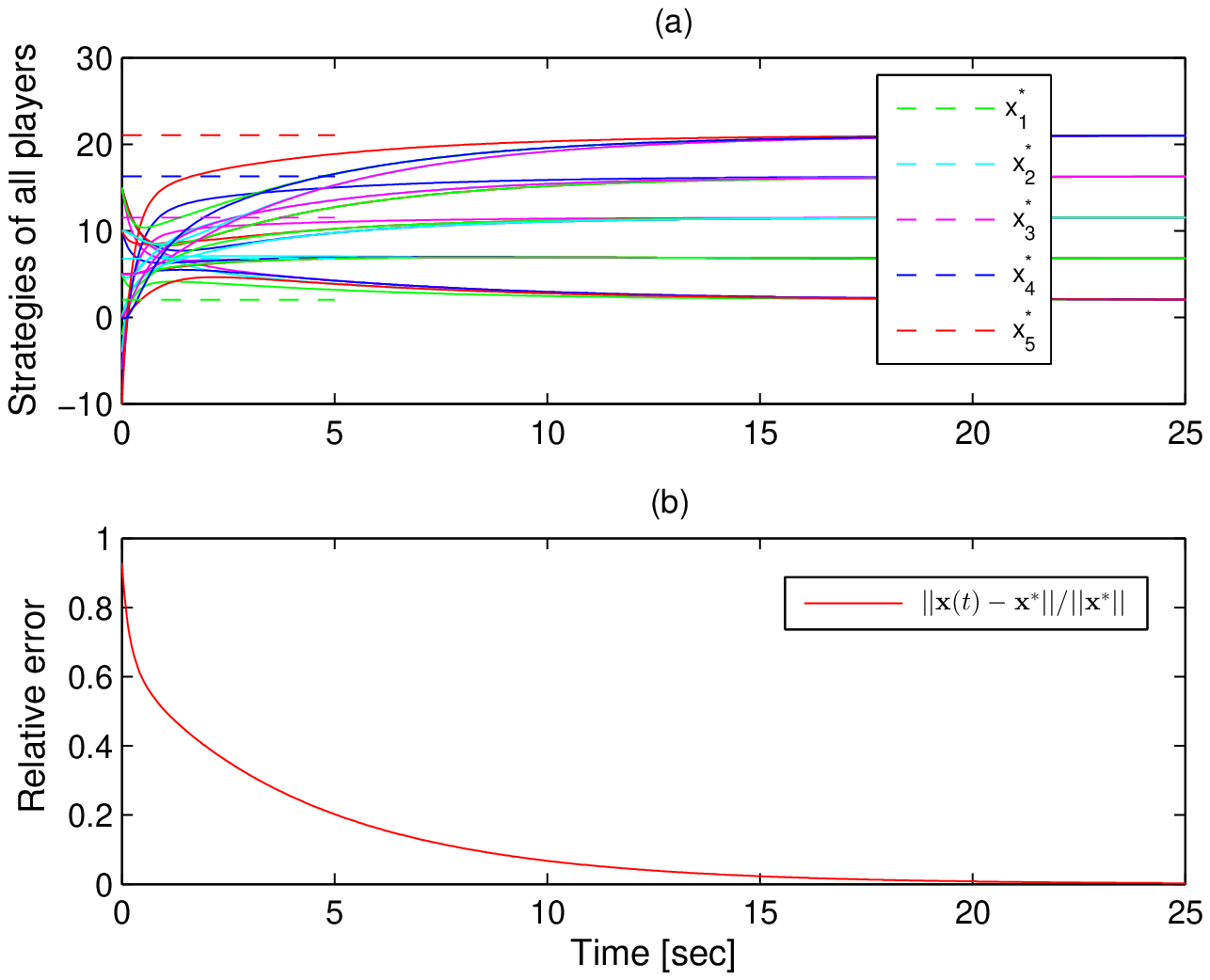}
			\label{L2}}
	\end{tabular}
	\vspace*{-3pt}
	\caption{Simulated results of the algorithms in Corollary1 and \cite{Pavel19TAC}, respectively.}
	\label{Positionerror}
\end{figure}

\subsection{Prescribed-Time And Fully Distributed NE Seeking}
In this part, the proposed prescribed-time and fully distributed algorithm in \ref{AlgorithmAdaptive} is performed with the same simulation setting in the subsection V-A. The initial states for the dynamic parameter are set as $\kappa_{ij}(0)=0.5 $. The simulation results are shown in Figs. \ref{AlgorithmAdaptive}-\ref{DynamicGain}, where the plots of all players' strategies and relative errors are shown in Fig. \ref{AlgorithmAdaptive} and the trajectories of the dynamic gain are depicted in Fig. \ref{DynamicGain}, which converge to certain  constants. It follows from figures that players' actions reach a consensus and converge to the NE in a prescribed-time and fully distributed manner.

\begin{figure}[!t]
\centering
\includegraphics[width=8.0cm,height=5.5cm]{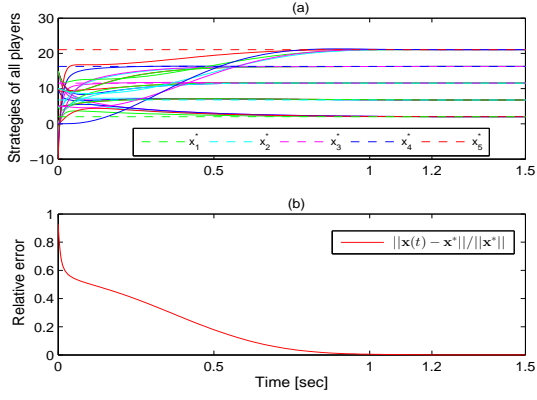}
\caption{Simulated results of the proposed algorithm in (\ref{PrescribedTimeAlgorithmAdaptive})
: (a) all players' estimated strategies $x^{i}_{j}(t), i, j\in \mathcal{V}$; and (b) relative errors of all players' actions.
} 
\label{AlgorithmAdaptive}
\end{figure}

\begin{figure}[!t]
\centering
\includegraphics[width=8.0cm,height=5.5cm]{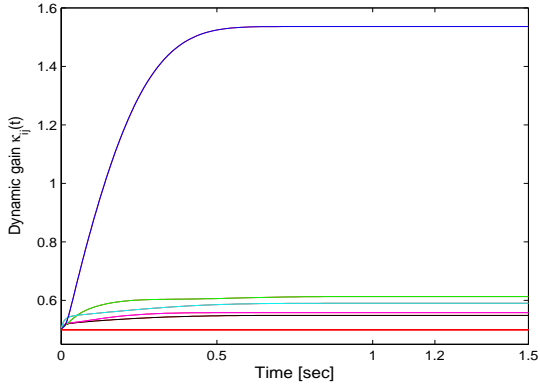}
\caption{The plot of all players' dynamic gain $\kappa_{ij}(t), i, j\in \mathcal{V}$ generated by (\ref{PrescribedTimeAlgorithmAdaptive}).} 
\label{DynamicGain}
\end{figure}

\vspace{-6pt}
\subsection{Prescribed-Time And Fully Distributed NE Seeking over Jointly Switching Topologies}
In this simulation, the proposed algorithm over jointly switching network topology $\mathcal{G}_{\sigma(t)}$ will be conducted. Specifically, we  consider the topologies dictated by the following switching signal: 
\vspace{-3pt}
\begin{equation}
\sigma(t)=\left\{ 
\begin{array}{c}
\hspace{-3.5em} 1, \ \text{if} \ s \varGamma \leq t< (s+0.25) \varGamma; \\ 
2, \ \text{if} \ (s+0.25)\varGamma \leq t< (s+0.5) \varGamma; \\
3, \ \text{if} \ (s+0.5)\varGamma \leq t< (s+0.75) \varGamma; \\
\hspace{-0.9em} 4, \ \text{if} \ (s+0.75)\varGamma \leq t< (s+1) \varGamma, 
\end{array}
\right. 
\label{SwitchingSignal}
\end{equation}
where $ \varGamma=0.4 $sec., and $ s=0,1,2,\cdots $. The signal $ \sigma(t) $ defines fixed graphs $\mathcal{G}_{i}$, $ i=1,2,3,4 $ as shown in Fig. \ref{SwitchingGraphs}. As can be seen, Assumption \ref{AssumptionGraphSwitching} is satisfied even though $\mathcal{G}_{i}$ is disconnected at $ t\geq 0 $.

Next, we perform the proposed algorithm in the form of (\ref{PrescribedTimeAlgorithmAdaptiveSwitching}) to accommodate switching graphs with the same simulation setting in the subsection V-B. The simulation results are shown in Figs. \ref{AlgorithmAdaptiveSwitching}-\ref{AlgorithmAdaptiveSwitchingK}. It can be seen that all players' actions can reach a  consensus and converge to the NE in a prescribed-time and fully distributed manner over jointly switching graphs.

\begin{figure}[!h]
\centering
\includegraphics[width=8.0cm,height=6.45cm]{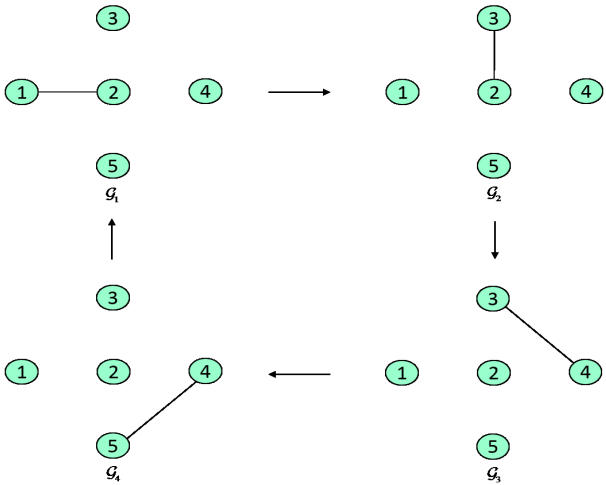}
\caption{The jointly switching topologies $\mathcal{G}_{i}$ with $ \mathcal{P}={1,2,3,4} $.} 
\label{SwitchingGraphs}
\end{figure} 

\begin{figure}[!t]
	\centering
	\includegraphics[width=8.0cm,height=5.5cm]{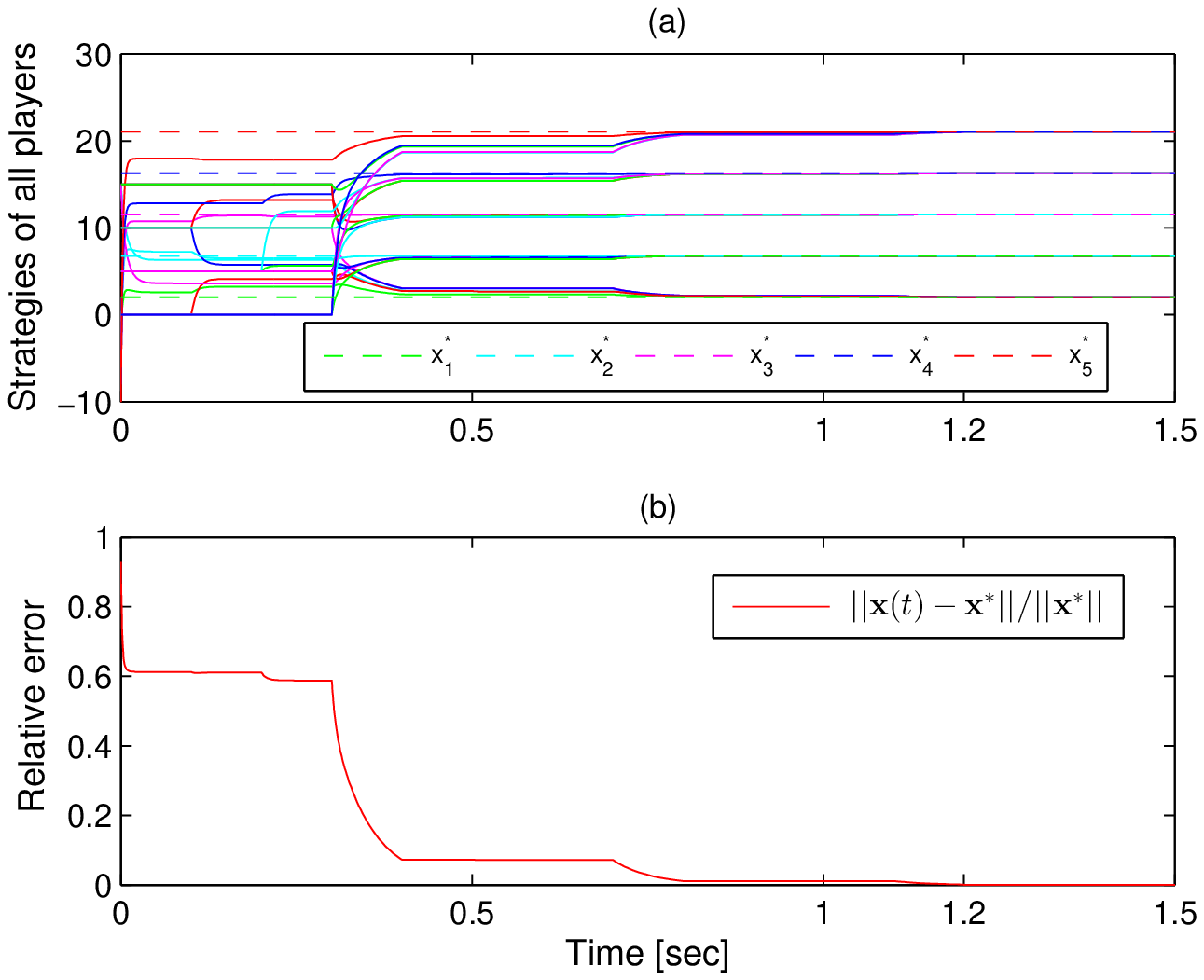}
	\caption{Simulated results of the proposed algorithm in (\ref{PrescribedTimeAlgorithmAdaptiveSwitching}): (a) all players' estimated strategies $x^{i}_{j}(t), i, j\in \mathcal{V}$; and (b) relative errors of all players' actions.} 
	\label{AlgorithmAdaptiveSwitching}
\end{figure} 

\begin{figure}[!t]
	\centering
	\includegraphics[width=8.0cm,height=5.4cm]{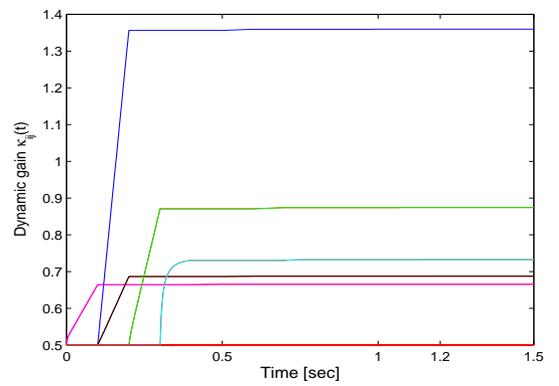}
	\caption{The plot of all players' dynamic gain $\kappa_{ij}(t), i, j\in \mathcal{V}$ generated by (\ref{PrescribedTimeAlgorithmAdaptiveSwitching}).} 
	\label{AlgorithmAdaptiveSwitchingK}
\end{figure}

\begin{example}
	(General Non-Quadratic Game)
\end{example}

In this example, we investigate a more general non-quadratic noncooperative game, in which the objective functions for each player $ i $ are described by 
\vspace{-3pt}
\begin{align}
J_{1}(x_{1},x_{-1})&= \frac{x^{2}_{1}}{2}+x_{1}\sum^{5}_{j=2}x_{j},   J_{2}(x_{2},x_{-2})= \frac{e^{\frac{x_{2}}{2}}}{2}+x_{2}x_{4}, \notag \\
J_{3}(x_{3},x_{-3})&=\frac{x^{2}_{3}}{2}+x_{1}^{3}, \
J_{4}(x_{4},x_{-4})= \ln(e^{x_{4}})+x^{2}_{4}+x^{3}_{3}, \notag \\
J_{5}(x_{5},x_{-5})&= x^{2}_{5}-5x_{5}+x^{3}_{1}x_{2} +x_{3}x^{4}_{4}, \ i=1,\cdots,5.
\end{align} 

\vspace{2pt}
Based on calculations, the NE is $ x^{*} =\text{col} (-4.6589,4.1589,0,\\  -2,2.5)$. The switching communication topologies are depicted in Fig. \ref{SwitchingGraphs}. Without loss of generality, we only perform the proposed prescribed-time and fully distributed NE algorithm in (\ref{PrescribedTimeAlgorithmAdaptiveSwitching}) for this non-quadratic game. The same simulation settings as those in the subsection V-C are considered. 

The simulation result is depicted in Fig. \ref{NonQuadraticGame}, where all players' strategies and relative errors are provided in Fig. \ref{NonQuadraticGame} (a) and Fig. \ref{NonQuadraticGame} (b), respectively. As we can see, under the proposed algorithm, all players' estimates  reach a consensus and converge to the NE of this non-quadratic noncooperative game in the  prescribed-time and fully distributed manner over switching graphs.

\begin{figure}[!h]
\centering
\hspace{-0.5em}
\includegraphics[width=9.0cm,height=8.5cm]{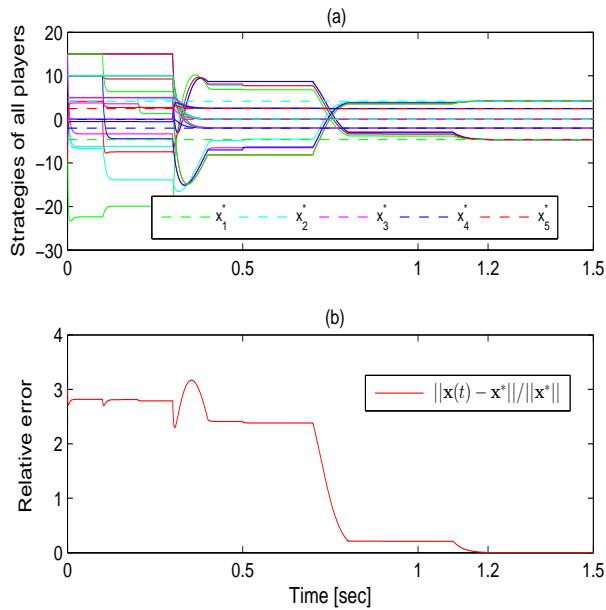}
\caption{Simulated results of the proposed algorithm in (\ref{PrescribedTimeAlgorithmAdaptiveSwitching}): (a) all players' estimated strategies $x^{i}_{j}(t), i, j\in \mathcal{V}$; and (b) relative errors of all players' actions.} 
\label{NonQuadraticGame}
\end{figure}

\section{Conclusion} 
In this paper, prescribed-time fully distributed algorithms have been presented for NE seeking of non-cooperative games, where players' strategies are updated through a communication graph. The significant feature of proposed algorithms is that the global convergence of the NE is achieved in a prescribed-time and fully distributed manner. That is, the convergence time is user-defined according to task requirements and it is independent of any initial conditions and design parameters. Moreover, the proposed fully distributed algorithm does not require any global information on the algebraic connectivity of graphs, the Lipschitz and monotone constants of pseudo-gradients, and the number of players. Lastly, we have extended this algorithm to accommodate jointly switching graphs. The effectiveness of the developed approach has been illustrated by the numerical examples. Further work may consider distributed NE seeking problems for noncooperative games with coupled equality and/or inequality constraints.


\end{document}